\documentclass[journal,a4paper]{IEEEtran}
\usepackage[T1]{fontenc}
\usepackage[utf8]{inputenc}
\usepackage{bm}
\usepackage{amsmath}
\usepackage{mathtools}
\usepackage{amssymb}
\usepackage{amsthm}
\usepackage{graphicx}
\usepackage{algorithm}
\usepackage{algorithmic}
\usepackage{cite} 
\usepackage[bookmarks=false,
 breaklinks=false,pdfborder={0 0 0},colorlinks=false]
 {hyperref}

\newcommand{\E}{\mathbb{E}}
\newcommand{\C}{\mathbb{C}}
\DeclareMathOperator{\diag}{diag}
\DeclareMathOperator{\tr}{tr}
\DeclareMathOperator{\rank}{rank}
\DeclareMathOperator{\blkdiag}{blkdiag}
\DeclareMathOperator*{\argmin}{arg\,min}

\usepackage[caption=false,font=footnotesize]{subfig}

\newtheorem{proposition}{Proposition}

\theoremstyle{definition}

\theoremstyle{remark}
\newtheorem{remark}{Remark}

\begin{document}
\title{Exact Payload-Decoupling Conditions for Pilot-Only BEM Channel Estimation With Application to OTFS}
\author{Gianmarco~Romano, Francesco~A.~N.~Palmieri,~\IEEEmembership{Member,~IEEE,}
Stefano~Buzzi, \IEEEmembership{Fellow,~IEEE},
Giovanni~Di~Gennaro, and~Amedeo~Buonanno, \IEEEmembership{Senior Member,~IEEE}
\thanks{Part of this work was presented at the IEEE International Workshop on Signal Processing Advances in Wireless Communications (SPAWC 2026)~\cite{romano2026}.}
\thanks{G. Romano, F.A.N. Palmieri and G. Di Gennaro are with Dipartimento di Ingegneria, Università degli Studi della Campania ``Luigi~Vanvitelli'', Aversa (CE), Italy (e-mails:~\{gianmarco.romano; francesco.palmieri; giovanni.digennaro\}@unicampania.it).}
\thanks{S. Buzzi is with University of Cassino and Southern Lazio, Cassino, Italy, and with Consorzio Nazionale Interuniversitario per le Telecomunicazioni, Parma, Italy (e-mail: buzzi@unicas.it).}
\thanks{A. Buonanno is with Dept. of Energy Technologies and Renewable Sources, ENEA, Portici (NA), Italy (e-mail: amedeo.buonanno@enea.it).}
}
\markboth{}{Romano \MakeLowercase{\emph{et al.}}: Exact Payload-Decoupling Conditions for Pilot-Only Basis-Expansion-Model Channel Estimation}
\maketitle

\begingroup
\footnotesize
\centering
\emph{This work has been submitted to the IEEE for possible publication. Copyright may be transferred without notice, after which this version may no longer be accessible.}\par
\endgroup
\vspace{0.6em}

\begin{abstract}
In high-mobility doubly dispersive links, basis expansion models (BEMs) reduce channel dimensionality, yet unknown payload symbols generally contaminate conventional matched-pilot channel estimates. This paper establishes the exact conditions under which such estimates become payload-independent and derives a pilot, guard, and data-placement rule that guarantees these conditions. We prove a necessary-and-sufficient zero pilot--data interference (ZPDI) condition under which the matched-pilot least-squares (LS) solution coincides with the maximum-likelihood (ML) estimator for the reduced pilot statistic. When ZPDI holds, estimation requires a single precomputed projection. When it does not, the estimate contains a deterministic, channel-scaled payload bias that persists at high signal-to-noise ratio. A disjoint-support rule, independent of the selected basis, realizes ZPDI through pilot, guard, and data placement. We then specialize the framework to orthogonal time--frequency space (OTFS) and examine its structural and performance consequences. With the generalized complex-exponential BEM (GCE-BEM), the ZPDI estimator remains within about $2$~dB of the perfect channel state information benchmark in bit error rate at speeds up to 500~km/h.
\end{abstract}

\begin{IEEEkeywords}
Channel estimation, doubly dispersive channels, basis expansion models, OTFS, pilot symbols.
\end{IEEEkeywords}
 
\IEEEpeerreviewmaketitle{}

\section{Introduction}

\IEEEPARstart{A}{ccurate} channel state information (CSI) supports coherent detection, adaptive modulation, and resource allocation, and is therefore required for reliable wireless-system operation~\cite{tse2005,goldsmith2005}. High-mobility scenarios, including high-speed railway (HSR) links, unmanned aerial vehicle (UAV) communications, and low-Earth-orbit (LEO) satellite systems, involve rapidly time-varying channels with large Doppler spreads and severe inter-carrier interference (ICI)~\cite{chen2023,khuwaja2018,darya2025}. In these doubly dispersive environments, conventional orthogonal frequency-division multiplexing (OFDM) loses subcarrier orthogonality~\cite{hlawatsch2011}. Orthogonal time--frequency space (OTFS) modulation offers an alternative because high-mobility channels appear approximately sparse and quasi-stationary in the delay--Doppler (DD) domain~\cite{hadani2017,wei2021,hong2022}. Practical OTFS channels, however, often have fractional Doppler and fractional delay, which spread energy over several DD bins and reduce this sparsity~\cite{hong2022,deng2025}.

Existing channel estimators operate either as standalone methods or jointly with symbol detection. Standalone methods provide a channel estimate before payload detection, whereas joint receivers couple or alternate the two tasks. The available approaches include delay--Doppler domain techniques~\cite{raviteja2019,yuan2021}, time--frequency domain methods~\cite{sheng2024}, and Bayesian, matching-pursuit, message-passing, and artificial-intelligence-based frameworks~\cite{aslandogan2025,li2025}. These approaches still face practical limitations. Methods that accommodate fractional parameters often incur high computational cost, whereas deep learning (DL) solutions require substantial training data, computation, and energy; these requirements can be prohibitive when propagation conditions change dynamically~\cite{li2025}.

Basis expansion models (BEMs) provide a structured parametric representation of time-varying channels for pilot-assisted estimation. They project the channel response onto a finite-dimensional subspace, reducing the number of unknown parameters while retaining the relevant channel dynamics~\cite{giannakis1998,ma2003a,hlawatsch2011}. BEM-based channel estimation has been studied for several modulation schemes and, more recently, for OTFS and waveforms such as affine frequency division multiplexing (AFDM). In practice, channel state information must be obtained primarily from pilot observations and may subsequently be refined through data-aided processing. Even in a data-aided receiver, the first stage must estimate the channel before reliable payload decisions are available. Pilot--data interference (PDI), however, degrades this initial estimate and has prompted several mitigation methods. Recent studies embed BEMs in joint channel-estimation and detection frameworks~\cite{huang2023,guo2025,guo2024}. Huang \textit{et al.}~\cite{huang2023}, for example, exploit decoded data symbols in a turbo-equalization loop, whereas Guo \textit{et al.}~\cite{guo2025} use generalized compressed sensing and iterative cancellation to mitigate residual PDI. Qu \textit{et al.}~\cite{qu2021a} instead adopt a discrete prolate spheroidal sequence (DPSS) subspace with linear minimum mean-squared error (LMMSE) estimation and treat PDI statistically. Liu \textit{et al.}~\cite{liu2022} derive a generalized complex-exponential BEM (GCE-BEM) OTFS input--output model, but their receiver still depends on iterative data-aided equalization.

Closed-form, non-iterative BEM estimators are also available for specific OTFS settings. Mohebbi \textit{et al.}~\cite{mohebbi2026} derive a two-dimensional discrete prolate spheroidal BEM (DPS-BEM) representation for multiple-input multiple-output OTFS (MIMO-OTFS) and recover the BEM coefficients by least squares under a particular pilot structure. Wu \textit{et al.}~\cite{wu2025} develop a GCE-BEM estimator for OTFS with hardware impairments using a single-pilot/guard construction. Their statistic nevertheless contains pilot- and data-dependent terms, and the data-dependent component is treated as effective interference or noise. Thus, despite their closed forms, these methods remain tied to particular BEM choices, pilot structures, statistical assumptions, or treatments of interference.

Our central result is a necessary-and-sufficient characterization of when the conventional matched-pilot BEM estimator is payload-independent and coincides with the maximum-likelihood (ML) estimator for the reduced statistic. Under the resulting zero pilot--data interference (ZPDI) condition, the statistic contains no data term for any admissible payload and coefficient vector. If ZPDI fails, the same filter contains a deterministic, channel-scaled payload bias that persists at high signal-to-noise ratio (SNR) for any fixed pilot-to-data power ratio. We therefore identify the exact boundary between the validity of the least-squares (LS) solution as a precomputable pilot-only ML estimator and its structural contamination by unknown data. Beyond this boundary, the matched-pilot architecture must tolerate, model, or cancel the interference through data-aided processing.

A pilot--data orthogonality condition of this kind has appeared before. For cyclic-prefixed affine transmissions over doubly selective channels described by the complex-exponential BEM (CE-BEM), Kannu and Schniter~\cite{kannu2008} showed that pilot--data orthogonality, together with an optimal-excitation condition, is necessary and sufficient for the Bayesian LMMSE channel estimate to attain its mean-squared error (MSE) lower bound. The present formulation asks a different question, covers a broader model class, and keeps payload decoupling separate from excitation. ZPDI characterizes when the conventional matched-pilot projection yields an exactly payload-decoupled, prior-free estimate within the selected BEM subspace, rather than when a statistically optimal estimator exists. Because the condition is stated at the operator level, it covers arbitrary BEM subspaces, any unitary observation transform, and the cyclic-prefix, zero-padding, and chirp-periodic-prefix guard structures, rather than only the CE-BEM and time-domain cyclic-prefix setting of~\cite{kannu2008}. In our formulation, the excitation condition also has a distinct and more general role: a scaled-identity pilot Gram matrix governs the estimator conditioning and the associated noise enhancement, but not whether the estimate is payload-decoupled. Separating these requirements leaves the Gram matrix as a design parameter. It need not be a scaled identity for the estimate to remain exactly payload-independent and can be deliberately non-orthogonal to trade conditioning for Doppler resolution, as in the oversampled GCE-BEM.

These distinctions have direct consequences for implementation and error analysis. Unlike iterative data-aided receivers~\cite{liu2022,huang2023} and statistics-dependent designs~\cite{qu2021a,wu2025}, the ZPDI estimator needs neither payload decisions nor stochastic channel priors. Its matrices depend only on the pilot pattern, basis, and system dimensions; they can therefore be precomputed, leaving an online stage that is linear in the frame length. Once PDI is removed exactly, the remaining error separates into thermal noise shaped by the pilot Gram matrix and modeling error controlled by the BEM. The framework replaces heuristic or simulation-only acceptance of a pilot arrangement with a rigorous design criterion and shows whether decoupling, conditioning, or BEM resolution limits the estimate.

The paper develops the payload-decoupling characterization, a design rule that realizes it, and the resulting structural and computational consequences:

\begin{enumerate}
    \item We prove necessary and sufficient conditions under which the matched pilot projection is exactly decoupled from unknown payload symbols, for arbitrary BEM subspaces and block-linear modulation structures that admit the operator model considered here. The result determines when the conventional matched-pilot LS solution is a valid payload-independent ML estimator for the reduced observation, rather than a data-contaminated estimate.
    \item We provide a disjoint-support sufficient condition that is independent of the chosen BEM basis, turning the ZPDI requirement into a concrete rule for placing pilots, guards, and data so that ZPDI holds across arbitrary BEM representations and the considered block-linear modulation structures.
    \item We characterize the pilot Gram matrix in terms of identifiability, conditioning, and structure. The analysis shows how the number of pilot observations limits the admissible BEM order, how Gram-matrix conditioning controls noise enhancement, and how zero inter-tap pilot leakage turns the offline inversion from a full dense solve into independent per-tap solves, with a further scalar simplification when the per-tap blocks are diagonal.
    \item We develop one detailed OTFS application using the standard full-guard pilot~\cite{raviteja2019}. We prove simultaneous ZPDI and zero pilot--pilot leakage, derive the Toeplitz GCE-BEM Gram matrix and its resolution--conditioning trade-off, and illustrate these results at speeds up to 500~km/h.
\end{enumerate}

The remainder of the paper is organized as follows. Section~\ref{sec:channel-model} presents the system model and BEM representation. Section~\ref{sec:estimation-problem} develops the pilot-projected LS/ML channel estimator and derives the zero pilot--data interference condition. Section~\ref{sec:gram-properties} analyzes the pilot Gram matrix and its role in identifiability, conditioning, and complexity. Section~\ref{sec:otfs-examples} applies the framework to OTFS modulation and analyzes the resulting GCE-BEM structure. Finally, Section~\ref{sec:numerical_examples} presents numerical results, and Section~\ref{sec:conclusion} concludes the paper.

\textit{Notation:} Bold uppercase ($\mathbf{A}$) and lowercase ($\mathbf{a}$) letters denote matrices and vectors, respectively. Superscripts $(\cdot)^{T}$, $(\cdot)^{H}$, and $(\cdot)^{*}$ denote transpose, conjugate transpose, and complex conjugate. $\mathbf{A}^{-1}$ denotes the inverse. Operators $\diag(\cdot)$, $\blkdiag(\cdot)$, and $\operatorname{vec}(\cdot)$ denote diagonal matrices, block-diagonal matrices, and column-wise vectorization. $\|\cdot\|$ and $\|\cdot\|_{F}$ denote Euclidean and Frobenius norms, while $\odot$ and $\otimes$ denote Hadamard and Kronecker products. $\mathbf{1}[\cdot]$ denotes the indicator function, and $\E\{\cdot\}$ denotes statistical expectation.

\section{System Model and BEM Representation}
\label{sec:channel-model}

We consider the transmission of a block represented by vector $\mathbf{x} \in \C^{N}$ through a doubly dispersive channel characterized by its discrete-time impulse response $h[n,\ell]$, where the first index $n$ denotes \emph{time} (sample index) and the second index $\ell$ denotes \emph{delay} (tap index). The time dependence of $h[n,\ell]$ captures the Doppler-induced variation of each multipath component across the block, while the finite range $\ell\in\{0,\ldots,L-1\}$ reflects bounded delay spread. The receiver observes additive white Gaussian noise (AWGN) represented by vector $\mathbf{v}\in\C^{N}$ with zero mean and covariance $\sigma_v^2\mathbf{I}$.

To eliminate intersymbol interference arising from multipath propagation, a guard of $N_g\geq L-1$ samples is added to each transmitted block. After removing the guard at the receiver, the $N$-sample useful portion of the received signal satisfies
\begin{equation}
    y[n] = \sum_{\ell=0}^{L-1} h[n, \ell]\, x[n - \ell] + v[n],
    \label{eq:scalar_convolution_useful}
\end{equation}
for $n = 0, 1, \ldots, N-1$. This equation describes the time-varying convolution between the transmitted signal and the channel~\cite{tse2005}. With the tap-based finite-impulse-response (FIR) parametrization, the estimation model does not require the physical Doppler shifts to be integer multiples of the bin resolution. In contrast, delay--Doppler parametrizations must explicitly model the inter-bin leakage caused by off-grid shifts~\cite{benzine2026}.

The input--output relationship \eqref{eq:scalar_convolution_useful} can be written as the matrix--vector product
\begin{equation}
    \mathbf{y} = \mathbf{H} \mathbf{x} + \mathbf{v},
    \label{eq:matrix_io_relation}
\end{equation}
where the effective channel matrix $\mathbf{H}\in\C^{N\times N}$ is constructed as follows. The channel coefficient matrix $\mathbf{G}\in\C^{N\times L}$, with columns $\mathbf{g}_\ell=[h[0,\ell],\ldots,h[N-1,\ell]]^T\in\C^N$ for $\ell=0,\ldots,L-1$, collects the time-varying tap sequences over the block. After guard insertion and removal, the effective interference-free channel matrix decomposes as
\begin{equation}
\mathbf{H}=\sum_{\ell=0}^{L-1}\diag(\mathbf{g}_{\ell})\,\mathbf{S}_{\ell},
\label{eq:Hguard}
\end{equation}
where $\mathbf{S}_\ell\in\C^{N\times N}$ is the guard-dependent shift operator. The three guard modes considered here give the following shift structures:
\begin{itemize}
  \item Cyclic Prefix (CP): $\mathbf{S}_{\ell,\mathrm{cp}}=\boldsymbol{\Pi}^{\ell}$, where $\boldsymbol{\Pi}$ is the $N\times N$ cyclic permutation matrix
  \item Zero Padding (ZP): $[\mathbf{S}_{\ell,\mathrm{zp}}]_{n,k}=\mathbf{1}[n=k+\ell]$, the nilpotent lower-shift of order $\ell$, which treats wrapped samples as zero rather than recycling them.
  \item Chirp Periodic Prefix (CPP): $\mathbf{S}_{\ell,\mathrm{cpp}}=\boldsymbol{\Omega}_\ell\boldsymbol{\Pi}^\ell$, where $\boldsymbol{\Omega}_\ell=\diag\!\left(\omega_\ell[k]\right)_{k=0}^{N-1}$ with $\omega_\ell[k]=e^{j2\pi c_1(\ell^2-2\ell k)/N}$ for $k=0,\ldots,\ell-1$ and $\omega_\ell[k]=1$ otherwise, so that the chirp phase correction applies only to the $\ell$ cyclically wrapped samples~\cite{bemani2023,benzine2026}.
\end{itemize}

The decomposition \eqref{eq:Hguard} reveals that the effective channel matrix is a superposition of $L$ time-varying scaled-shift terms, one per delay tap, with the shift structure fully determined by the guard mode.

The transmit vector $\mathbf{x}$ carries both pilot and data symbols. A symbol vector $\mathbf{s}\in\C^{N_s}$ collects $N_p$ known pilot symbols $\mathbf{s}_p\in\C^{N_p}$ and $N_u$ unknown payload symbols $\mathbf{s}_u\in\C^{N_u}$, with $N_s=N_p+N_u$. Binary selection matrices $\mathbf{P}_p\in\{0,1\}^{N_s\times N_p}$ and $\mathbf{P}_u\in\{0,1\}^{N_s\times N_u}$ place each class into the common vector as
\begin{equation}
\mathbf{s}
=
\mathbf{P}_p\mathbf{s}_p
+
\mathbf{P}_u\mathbf{s}_u,
\qquad
\mathbf{P}_p^{T}\mathbf{P}_u=\mathbf{0},
\label{eq:symbol_decomposition}
\end{equation}
where the constraint $\mathbf{P}_p^{T}\mathbf{P}_u=\mathbf{0}$ enforces disjoint pilot and payload positions before modulation.

A coding matrix $\mathbf{C}\in\C^{N\times N_s}$ then maps the symbol vector into the transmitted block $\mathbf{x}=\mathbf{C}\mathbf{s}$. At the receiver, a unitary
transformation $\mathbf{D}\in\C^{N\times N}$ ($\mathbf{D}^H\mathbf{D}=\mathbf{I}$)
maps the received block into the desired signal domain. The unified input-output relation is
\begin{equation}
\mathbf{r}
=
\mathbf{D}\!\left(\sum_{\ell=0}^{L-1}\diag(\mathbf{g}_{\ell})\mathbf{S}_\ell\right)
\mathbf{C}\left(\mathbf{P}_p\mathbf{s}_p+\mathbf{P}_u\mathbf{s}_u\right)
+\mathbf{w},
\label{eq:transform-domain-system-model}
\end{equation}
where $\mathbf{w}=\mathbf{D}\mathbf{v}\sim\mathcal{CN}(\mathbf{0},\sigma_w^2\mathbf{I})$.

Estimating the full $NL$-element time-varying channel matrix $\mathbf{G}$ directly from pilot observations is a high-dimensional problem. For realistic mobile channels satisfying the underspread property, the true channel degrees of freedom are much smaller than the apparent dimensionality $NL$ of $\mathbf{G}$.

BEMs exploit this structural redundancy by representing the channel with $Q$ basis functions and $QL$ coefficients, where $Q \ll N$. The effective channel dimension is thereby reduced from $NL$ to $QL$ parameters, at the cost of a controlled modeling error and with a substantial reduction in pilot overhead and estimation complexity. The general BEM representation of the channel coefficient matrix is
\begin{equation}
\mathbf{G} = \boldsymbol{\Phi} \boldsymbol{\Gamma} + \mathbf{E},
\label{eq:BEM}
\end{equation}
where $\boldsymbol{\Phi}\in\C^{N\times Q}$ is the basis matrix, $\boldsymbol{\Gamma}\in\C^{Q\times L}$ is the coefficient matrix, and $\mathbf{E}\in\C^{N\times L}$ is the modeling error containing the channel components outside the BEM subspace. The choice of $Q$ must satisfy two competing requirements. It should be much smaller than $N$ to provide meaningful dimensionality reduction, but large enough to keep the modeling error acceptably small under the anticipated channel conditions.

Substitution of the BEM representation \eqref{eq:BEM} into the system model \eqref{eq:transform-domain-system-model} directly relates the received signal to the BEM coefficients. We first define the guard-shifted transmit vectors
\begin{equation}
    \mathbf{z}_{\ell} = \mathbf{S}_{\ell}\mathbf{x}, \quad \ell = 0, 1, \ldots, L-1,
\label{eq:shifted-symbol}
\end{equation}
where each vector is the transmit block $\mathbf{x}$ shifted by $\mathbf{S}_\ell$ according to the guard structure of the $\ell$-th delay tap. We arrange these vectors in the composite block matrix
\begin{equation}
    \mathbf{Z}(\mathbf{x}) = [\diag(\mathbf{z}_{0}), \ldots, \diag(\mathbf{z}_{L-1})]\in\C^{N\times NL},
\end{equation}
which places the transmit contribution of all delay taps in a form compatible with the time-varying channel response. Define the BEM embedding matrix as
\begin{equation}
\mathbf{B}\triangleq\mathbf{I}_{L}\otimes\boldsymbol{\Phi}\in\C^{NL\times QL},
\label{eq:bem-embedding}
\end{equation}
and using $\operatorname{vec}(\mathbf{G})=\mathbf{B}\boldsymbol{\gamma} + \operatorname{vec}(\mathbf{E})$ with $\boldsymbol{\gamma}=\operatorname{vec}(\boldsymbol{\Gamma})\in\C^{QL}$, the BEM sensing matrix
\begin{equation}
\boldsymbol{\Psi}(\mathbf{x})=\mathbf{D}\mathbf{Z}(\mathbf{x})\mathbf{B}\in\C^{N\times QL}
\label{eq:Psi-def}
\end{equation}
contains the joint effects of the transmit signal, the basis functions, and the observation-domain transformation. With the effective modeling error $\boldsymbol{\eta}(\mathbf{x})=\mathbf{D}\mathbf{Z}(\mathbf{x})\operatorname{vec}(\mathbf{E})$, the system model becomes
\begin{equation}
    \mathbf{r}=
    \boldsymbol{\Psi}_{p}\boldsymbol{\gamma} +
    \boldsymbol{\Psi}_{u}(\mathbf{s}_u)\boldsymbol{\gamma} +
    \boldsymbol{\eta}(\mathbf{x}) +
    \mathbf{w},
\label{eq:system-model-compact-bem}
\end{equation}
where $\boldsymbol{\Psi}(\mathbf{x})=\boldsymbol{\Psi}_{p}+\boldsymbol{\Psi}_{u}(\mathbf{s}_u)$ follows from the linearity of $\mathbf{Z}(\cdot)$. This separation also identifies which terms are known at the receiver. The pilot contribution $\boldsymbol{\Psi}_{p}\boldsymbol{\gamma}$ supports pilot-only channel estimation, whereas the payload contribution $\boldsymbol{\Psi}_{u}(\mathbf{s}_u)\boldsymbol{\gamma}$ depends on unknown data and can interfere with the estimate. The receiver can construct the pilot sensing matrix before detection, but it cannot construct the payload sensing matrix until the data symbols are known. The modeling error $\boldsymbol{\eta}$ measures the mismatch between the true channel and its BEM approximation, and its effect depends on the basis functions and on $Q$.

\section{Pilot-Projected Channel Estimation and Zero Pilot--Data Interference}
\label{sec:estimation-problem}

The BEM representation reduces channel estimation to the recovery of $\boldsymbol{\gamma}$ from the observation \eqref{eq:system-model-compact-bem}. The receiver must estimate these coefficients from the known pilots before equalization, but the system model contains error terms that the receiver cannot reconstruct.

In practice, the receiver adopts the observation model
\begin{equation}
\mathbf{r}
=
\boldsymbol{\Psi}_{p}\boldsymbol{\gamma}
+
\boldsymbol{\Psi}_{u}(\mathbf{s}_u)\boldsymbol{\gamma}
+
\mathbf{w},
\label{eq:assumed-pilot-payload-model}
\end{equation}
where the modeling error \(\boldsymbol{\eta}(\mathbf{x})\) is omitted. Any out-of-subspace channel component, however, remains as a mismatch that does not average out with the noise and therefore bounds the achievable accuracy at high SNR.

The assumed model still depends on the unknown data symbols, and the receiver must in general solve a joint detection problem. The joint ML estimator of the payload and the channel coefficients minimizes the coupled LS cost
\begin{equation}
\begin{aligned}
(\hat{\mathbf{s}}_u,\hat{\boldsymbol{\gamma}})
&=
\argmin_{\substack{\mathbf{s}_u\in\mathcal{A}^{N_u}\\
\boldsymbol{\gamma}\in\C^{QL}}}\;
\mathcal{E}(\mathbf{s}_u,\boldsymbol{\gamma}),\\
\mathcal{E}(\mathbf{s}_u,\boldsymbol{\gamma})
&\triangleq
\left\|
\mathbf{r}
-
\boldsymbol{\Psi}(\mathbf{s}_u)\boldsymbol{\gamma}
\right\|^2 .
\end{aligned}
\label{eq:ls_objective}
\end{equation}

Consider first the genie-aided case in which the receiver knows the transmitted payload block \(\mathbf{s}_u\). For a fixed \(\mathbf{s}_u\), the sensing matrix \(\boldsymbol{\Psi}(\mathbf{s}_u)\) is known and \eqref{eq:assumed-pilot-payload-model} becomes a standard linear Gaussian model in \(\boldsymbol{\gamma}\). If \(\boldsymbol{\Psi}(\mathbf{s}_u)\) has full column rank, the LS estimate is
\begin{equation}
\hat{\boldsymbol{\gamma}}
=
\left(
\boldsymbol{\Psi}^{H}(\mathbf{s}_u)
\boldsymbol{\Psi}(\mathbf{s}_u)
\right)^{-1}
\boldsymbol{\Psi}^{H}(\mathbf{s}_u)\mathbf{r},
\label{eq:ls_solution_known_symbols}
\end{equation}
which coincides with the ML estimate under the Gaussian assumption. Since the modeling error has been dropped from \eqref{eq:assumed-pilot-payload-model}, even the ML solution may not recover the true $\boldsymbol{\gamma}$.

Decision-directed receivers approximate the target in \eqref{eq:ls_solution_known_symbols}. Iterative data-aided schemes, such as the BEM-OTFS receiver of Liu \textit{et al.}~\cite{liu2022}, begin with a pilot-based channel estimate, use it to detect the payload \(\hat{\mathbf{s}}_u\), rebuild the sensing matrix \(\boldsymbol{\Psi}(\hat{\mathbf{s}}_u)\), and refine \(\hat{\boldsymbol{\gamma}}\). This refinement exploits the payload, but its accuracy depends on the quality of the initial estimate because reliable payload decisions are unavailable before the channel is known. Errors in the initial estimate can therefore propagate through the payload decisions and into subsequent channel updates. Equation~\eqref{eq:ls_solution_known_symbols} cannot be used in the first stage, so a standalone pilot-only estimator is needed to initialize such a receiver.

We consider the conventional matched-pilot estimator, which uses the known pilot sensing matrix \(\boldsymbol{\Psi}_{p}\) because the payload contribution in \eqref{eq:assumed-pilot-payload-model} is not available at this stage. Applying the matched pilot filter \(\boldsymbol{\Psi}_{p}^{H}\) projects the received vector onto the pilot subspace and produces the \(QL\)-dimensional pilot statistic
\begin{equation}
\begin{aligned}
\mathbf{t}_p
\triangleq
\boldsymbol{\Psi}_p^H\mathbf{r}
&=
\mathbf{R}_{pp}\boldsymbol{\gamma}
+
\mathbf{R}_{pu}(\mathbf{s}_u)\boldsymbol{\gamma}
+
\widetilde{\mathbf{w}}_p ,
\end{aligned}
\label{eq:projected_observation_with_interference}
\end{equation}
where
\begin{equation}
\mathbf{R}_{pp}
\triangleq
\boldsymbol{\Psi}_p^H\boldsymbol{\Psi}_p,
\qquad
\mathbf{R}_{pu}(\mathbf{s}_u)
\triangleq
\boldsymbol{\Psi}_p^H\boldsymbol{\Psi}_u(\mathbf{s}_u),
\label{eq:Rpp-Rpu-defs}
\end{equation}
and the projection colors the thermal noise as
\begin{equation}
\widetilde{\mathbf{w}}_p
\triangleq
\boldsymbol{\Psi}_p^H\mathbf{w}
\sim
\mathcal{CN}
\left(
\mathbf{0},
\sigma_w^2\mathbf{R}_{pp}
\right).
\label{eq:matched-pilot-noise}
\end{equation}
The matched filter is the natural reduction for estimating \(\boldsymbol{\gamma}\) from the pilots, but it does not separate the pilot and payload contributions. The pilot statistic still carries the PDI term
\begin{equation}
\boldsymbol{\zeta}_p(\mathbf{s}_u)
\triangleq
\mathbf{R}_{pu}(\mathbf{s}_u)\boldsymbol{\gamma},
\label{eq:equivalent-disturbance}
\end{equation}
which arises from the cross-correlation \(\mathbf{R}_{pu}(\mathbf{s}_u)\) between the pilot and payload sensing matrices. Because it lies in the same \(QL\)-dimensional subspace used to recover \(\boldsymbol{\gamma}\), it cannot be separated from the useful term \(\mathbf{R}_{pp}\boldsymbol{\gamma}\) by linear filtering of \(\mathbf{t}_p\) alone.

The severity of the PDI becomes explicit when the ordinary matched-pilot LS filter \(\mathbf{R}_{pp}^{-1}\) is applied to the pilot statistic, which gives
\begin{equation}
\mathbf{R}_{pp}^{-1}\mathbf{t}_p
=
\boldsymbol{\gamma}
+
\mathbf{R}_{pp}^{-1}
\mathbf{R}_{pu}(\mathbf{s}_u)\boldsymbol{\gamma}
+
\mathbf{R}_{pp}^{-1}\widetilde{\mathbf{w}}_p .
\label{eq:pdi-contaminated-pilot-estimate}
\end{equation}
The first term is the desired channel, and the third is a noise contribution that vanishes as \(\sigma_w^2\rightarrow 0\). The second term is the PDI, which, unlike the noise, does not vanish at high SNR. Raising the pilot power relative to the payload attenuates but does not cancel it, and averaging over the noise leaves it unchanged, because for a given transmitted block the PDI is a deterministic function of the payload actually sent. A standalone matched-pilot estimator therefore cannot recover the channel reliably unless the PDI term is eliminated or rendered negligible.

The proposed approach enforces this requirement directly. If the pilot and payload sensing matrices are designed so that \(\mathbf{R}_{pu}(\mathbf{s}_u)=\mathbf{0}\) for every admissible payload vector, the PDI term disappears from \eqref{eq:projected_observation_with_interference} for every channel realization. The resulting estimator is independent of both the transmitted payload realization and any assumed payload distribution in this model. The known pilot Gram matrix then yields a closed-form matched-pilot LS/ML estimator for the reduced pilot statistic.
    
\subsection{Zero Pilot--Data Interference Condition}
\label{sec:zero-pilot-interference}

The pilot statistic in \eqref{eq:projected_observation_with_interference} depends on the unknown payload through the term \(\mathbf{R}_{pu}(\mathbf{s}_u)\boldsymbol{\gamma}\). The matched-pilot estimator is therefore decoupled from the data symbols only if
\begin{equation}
\mathbf{R}_{pu}(\mathbf{s}_u)
=
\boldsymbol{\Psi}_p^H\boldsymbol{\Psi}_u(\mathbf{s}_u)
= \mathbf{0},\quad\forall\,\mathbf{s}_u,
\label{eq:orthogonality-condition}
\end{equation}
which we term the ZPDI condition.

\begin{proposition}[Matched-Pilot Decoupling]
\label{prop:optimal-estimator}
Assume that \(\boldsymbol{\Psi}_{p}\) has full column rank, and consider the matched-pilot estimator \(\hat{\boldsymbol{\gamma}}=\mathbf{R}_{pp}^{-1}\mathbf{t}_p\). Within the selected BEM subspace, this estimator recovers \(\boldsymbol{\gamma}\) exactly in the noiseless model for every admissible payload vector and every coefficient vector if and only if the ZPDI condition \eqref{eq:orthogonality-condition} holds. Under \eqref{eq:orthogonality-condition},
\begin{equation}
\hat{\boldsymbol{\gamma}}
=
\mathbf{R}_{pp}^{-1}\mathbf{t}_p
=
\left(\boldsymbol{\Psi}_{p}^{H}\boldsymbol{\Psi}_{p}\right)^{-1}\boldsymbol{\Psi}_{p}^{H}\mathbf{r},
\label{eq:optimal_solution_zero_interference}
\end{equation}
which depends only on the pilot matrix $\boldsymbol{\Psi}_{p}$ and the received signal $\mathbf{r}$, and coincides with the ML estimator for the reduced pilot-projected observation \(\mathbf{t}_p\) in \eqref{eq:projected_observation_with_interference} under Gaussian noise with covariance \eqref{eq:matched-pilot-noise}.
\end{proposition}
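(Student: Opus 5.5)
The plan is to work directly from the expansion \eqref{eq:pdi-contaminated-pilot-estimate}. Full column rank of \(\boldsymbol{\Psi}_p\) makes \(\mathbf{R}_{pp}=\boldsymbol{\Psi}_p^H\boldsymbol{\Psi}_p\) Hermitian positive definite and hence invertible, so \(\hat{\boldsymbol{\gamma}}=\mathbf{R}_{pp}^{-1}\mathbf{t}_p\) is well defined. In the noiseless model (\(\mathbf{w}=\mathbf{0}\), \(\boldsymbol{\eta}\) omitted as in \eqref{eq:assumed-pilot-payload-model}) we get
\[
\hat{\boldsymbol{\gamma}}-\boldsymbol{\gamma}=\mathbf{R}_{pp}^{-1}\mathbf{R}_{pu}(\mathbf{s}_u)\boldsymbol{\gamma}.
\]
Exact recovery for all admissible \(\mathbf{s}_u\) and all \(\boldsymbol{\gamma}\in\C^{QL}\) is therefore equivalent to \(\mathbf{R}_{pp}^{-1}\mathbf{R}_{pu}(\mathbf{s}_u)\boldsymbol{\gamma}=\mathbf{0}\) for all such pairs.

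For \emph{sufficiency}, if \eqref{eq:orthogonality-condition} holds, the error is identically zero. For \emph{necessity}, fix an admissible \(\mathbf{s}_u\). The error must vanish for every \(\boldsymbol{\gamma}\), so choosing \(\boldsymbol{\gamma}\) as each canonical basis vector \(\mathbf{e}_i\) shows that every column of the matrix \(\mathbf{R}_{pp}^{-1}\mathbf{R}_{pu}(\mathbf{s}_u)\) vanishes. Left-multiplying by the invertible \(\mathbf{R}_{pp}\) then gives \(\mathbf{R}_{pu}(\mathbf{s}_u)=\mathbf{0}\). Since \(\mathbf{s}_u\) was arbitrary, ZPDI follows. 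The only delicate point is quantifier order: necessity relies on the ``every coefficient vector'' clause. For a single fixed \(\boldsymbol{\gamma}\), only \(\mathbf{R}_{pu}(\mathbf{s}_u)\boldsymbol{\gamma}=\mathbf{0}\) would follow. I will state this explicitly.

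For the closed form \eqref{eq:optimal_solution_zero_interference}, substitute \(\mathbf{t}_p=\boldsymbol{\Psi}_p^H\mathbf{r}\) and \(\mathbf{R}_{pp}=\boldsymbol{\Psi}_p^H\boldsymbol{\Psi}_p\). This expression involves only \(\boldsymbol{\Psi}_p\), which is built from known pilots, basis, shifts and \(\mathbf{D}\), together with \(\mathbf{r}\). It therefore depends on neither \(\mathbf{s}_u\) nor its distribution.

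For the ML claim, under ZPDI the reduced statistic obeys \(\mathbf{t}_p=\mathbf{R}_{pp}\boldsymbol{\gamma}+\widetilde{\mathbf{w}}_p\) with \(\widetilde{\mathbf{w}}_p\sim\mathcal{CN}(\mathbf{0},\sigma_w^2\mathbf{R}_{pp})\) by \eqref{eq:matched-pilot-noise}. The covariance is nonsingular, so the log-likelihood is, up to constants,
\[
-\sigma_w^{-2}(\mathbf{t}_p-\mathbf{R}_{pp}\boldsymbol{\gamma})^H\mathbf{R}_{pp}^{-1}(\mathbf{t}_p-\mathbf{R}_{pp}\boldsymbol{\gamma}).
\]
This is a strictly concave quadratic in \(\boldsymbol{\gamma}\). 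Setting its Wirtinger gradient to zero gives \(\mathbf{R}_{pp}\mathbf{R}_{pp}^{-1}(\mathbf{t}_p-\mathbf{R}_{pp}\boldsymbol{\gamma})=\mathbf{0}\), which yields the unique maximizer \(\boldsymbol{\gamma}=\mathbf{R}_{pp}^{-1}\mathbf{t}_p\). The same result follows more directly by noting that the quadratic is zero at that point and nonnegative everywhere. As a cross-check, this also equals the generalized LS solution, because the model matrix and the inverse covariance cancel (\(\mathbf{R}_{pp}^H\mathbf{R}_{pp}^{-1}\mathbf{R}_{pp}=\mathbf{R}_{pp}\)).

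I do not expect a real obstacle. The main care points are the necessity direction, which requires the quantification over all \(\boldsymbol{\gamma}\), and the explicit caveat that optimality holds only for the reduced statistic \(\mathbf{t}_p\) within the BEM subspace. It is not optimality for the full observation \(\mathbf{r}\), and the modeling error \(\boldsymbol{\eta}\) is excluded.
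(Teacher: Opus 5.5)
Your proposal is correct and follows the paper's proof essentially step by step. The noiseless error identity $\hat{\boldsymbol{\gamma}}-\boldsymbol{\gamma}=\mathbf{R}_{pp}^{-1}\mathbf{R}_{pu}(\mathbf{s}_u)\boldsymbol{\gamma}$ gives sufficiency directly and necessity via the ``every $\boldsymbol{\gamma}$'' quantifier plus invertibility of $\mathbf{R}_{pp}$, and the ML claim comes from maximizing the Gaussian likelihood of $\mathbf{t}_p\sim\mathcal{CN}(\mathbf{R}_{pp}\boldsymbol{\gamma},\sigma_w^2\mathbf{R}_{pp})$; your canonical-basis-vector step and the explicit stationarity/nonnegativity argument only spell out details the paper leaves implicit.
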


\begin{IEEEproof}
In the noiseless BEM model, the estimation error is, by \eqref{eq:projected_observation_with_interference}, \(\hat{\boldsymbol{\gamma}}-\boldsymbol{\gamma}=\mathbf{R}_{pp}^{-1}\mathbf{R}_{pu}(\mathbf{s}_u)\boldsymbol{\gamma}\). If \eqref{eq:orthogonality-condition} holds, this error vanishes for every \(\mathbf{s}_u\) and \(\boldsymbol{\gamma}\), which proves sufficiency. Conversely, exact recovery for every \(\boldsymbol{\gamma}\) and every admissible \(\mathbf{s}_u\) requires \(\mathbf{R}_{pp}^{-1}\mathbf{R}_{pu}(\mathbf{s}_u)\boldsymbol{\gamma}=\mathbf{0}\) for all \(\boldsymbol{\gamma}\) and \(\mathbf{s}_u\); nonsingularity of \(\mathbf{R}_{pp}\) then gives \(\mathbf{R}_{pu}(\mathbf{s}_u)=\mathbf{0}\) for every admissible payload vector, which proves necessity. Finally, under ZPDI, \(\mathbf{t}_p\sim\mathcal{CN}(\mathbf{R}_{pp}\boldsymbol{\gamma},\sigma_w^2\mathbf{R}_{pp})\), and maximization of the Gaussian likelihood over \(\boldsymbol{\gamma}\) yields \(\hat{\boldsymbol{\gamma}}=\mathbf{R}_{pp}^{-1}\mathbf{t}_p\).
\end{IEEEproof}

Proposition~\ref{prop:optimal-estimator} characterizes the validity domain of the familiar matched-pilot formula, not new LS algebra. Its necessity result applies specifically to the matched pilot statistic and does not exclude more general oblique projections of the complete observation. Such processing lies outside the reduced statistic \(\mathbf{t}_p\) and may enhance noise. ZPDI instead makes the standard pilot projection payload-independent without this correction.

Hereafter, the term \emph{ZPDI estimator} denotes the standard LS/ML
estimator in \eqref{eq:optimal_solution_zero_interference} when the pilot
pattern satisfies the ZPDI condition.

\begin{remark}[Relation to the Minimum Mean-Squared-Error Pilot-Aided Transmission Conditions of~\cite{kannu2008}]
\label{rem:kannu-relation}
Setting $\mathbf{D}=\mathbf{I}$, $\mathbf{S}_\ell=\boldsymbol{\Pi}^{\ell}$, and
$\boldsymbol{\Phi}$ equal to the CE-BEM basis reduces
\eqref{eq:orthogonality-condition} to the pilot--data orthogonality condition
of~\cite{kannu2008}, while the optimal-excitation condition there corresponds to
$\mathbf{R}_{pp}=(\mathcal{E}_p/N)\mathbf{I}$, i.e., the
$\kappa(\mathbf{R}_{pp})=1$ matched-filter case of
Section~\ref{subsec:complexity}. Here ZPDI and the Gram condition play separate
roles: ZPDI alone decides payload independence, for arbitrary BEM subspaces and the
guard structures of~\eqref{eq:Hguard}, while the Gram matrix is treated separately as
an identifiability, conditioning, and complexity parameter
(Section~\ref{sec:gram-properties}). This separation admits estimators whose Gram
matrix is deliberately not a scaled identity, such as the oversampled GCE-BEM of
Section~\ref{sec:otfs-examples}, which therefore remain valid while lying outside the
optimal-excitation characterization of~\cite{kannu2008}.
\end{remark}

Condition \eqref{eq:orthogonality-condition} admits a direct geometric reading. The columns of $\boldsymbol{\Psi}_p$ are the channel-shifted pilot waveforms through which the matched filter $\boldsymbol{\Psi}_p^H$ observes the BEM coefficient vector $\boldsymbol{\gamma}$, and those of $\boldsymbol{\Psi}_u(\mathbf{s}_u)$ are the corresponding payload waveforms. The requirement $\boldsymbol{\Psi}_p^H\boldsymbol{\Psi}_u(\mathbf{s}_u)=\mathbf{0}$ places the payload column space in the orthogonal complement of the pilot column space for every admissible $\mathbf{s}_u$, so the payload falls in a subspace the estimator does not observe and the term $\mathbf{R}_{pu}(\mathbf{s}_u)\boldsymbol{\gamma}$ in \eqref{eq:projected_observation_with_interference} vanishes for any channel realization.

The pilot-design problem thus becomes geometric: decoupling is obtained by shaping the supports of pilots, guards, and data within the frame so that their channel-shifted images occupy orthogonal subspaces.

This decoupling is deterministic. Unlike statistical interference suppression, decision-directed refinement, or iterative joint estimation and detection \eqref{eq:ls_objective}, whose estimates depend on the payload, its statistics, or tentative decisions, the ZPDI estimator reduces to the single matrix--vector product \eqref{eq:optimal_solution_zero_interference}; its Gram matrix $\mathbf{R}_{pp}$ depends only on the pilot pattern, the BEM basis, and the system dimensions, and is precomputed once and reused for every frame. The estimate is then limited not by pilot--data interference but only by thermal noise, the conditioning of $\mathbf{R}_{pp}$, the finite BEM modeling error, and residual support or synchronization mismatch.

The orthogonality $\boldsymbol{\Psi}_p^H\boldsymbol{\Psi}_u(\mathbf{s}_u)=\mathbf{0}$ is a single global condition. For pilot design, however, it is more informative to examine the condition one delay tap at a time. Writing
\(\boldsymbol{\Psi}_p=[\boldsymbol{\Psi}_{p,0},\ldots,\boldsymbol{\Psi}_{p,L-1}]\)
and similarly for the payload, ZPDI is guaranteed when
\begin{equation}
\boldsymbol{\Psi}_{p,\ell}^{H}\boldsymbol{\Psi}_{u,m}=\mathbf{0}\quad\forall\,\ell,m\in\{0,1,\ldots,L-1\},
\label{eq:orthogonality_requirement}
\end{equation}
where $\boldsymbol{\Psi}_{p,\ell} = \mathbf{D}\diag(\mathbf{z}_{p,\ell})\boldsymbol{\Phi}$ and $\boldsymbol{\Psi}_{u,m} = \mathbf{D}\diag(\mathbf{z}_{u,m})\boldsymbol{\Phi}$. Each block $\boldsymbol{\Psi}_{p,\ell}^{H}\boldsymbol{\Psi}_{u,m}$ measures the leakage of the data carried at delay $m$ into the pilot observation at delay $\ell$. Since the channel can place energy on any of the $L$ taps, a single index pair is not enough: complete decoupling requires all $L^{2}$ blocks to vanish simultaneously, so that the pilots remain orthogonal to the data after every admissible delay shift.

Condition \eqref{eq:orthogonality_requirement} still depends on the BEM basis $\boldsymbol{\Phi}$ and can therefore be checked only after the basis has been fixed. A stronger and simpler rule removes this dependence by acting directly on the time-domain pilot and data vectors before application of the basis. This rule requires more than disjoint pilot and data positions in the transmitted block. For \emph{every} pair of delay shifts $(\ell,m)\in\{0,\ldots,L-1\}^{2}$ applied independently to the pilot and data through the shift operators of \eqref{eq:Hguard}, the shifted pilot and shifted data must occupy disjoint positions. Disjoint unshifted supports alone do not ensure this property: a delay shift can move the pilot onto a data position, or the data onto a pilot position, even when their original supports are disjoint.

\begin{proposition}[Disjoint Support Condition for Zero Interference]
\label{prop:disjoint-support}
Consider the block-structured observation matrices $\boldsymbol{\Psi}_{p,\ell} = \mathbf{D}\diag(\mathbf{z}_{p,\ell})\boldsymbol{\Phi}$ and $\boldsymbol{\Psi}_{u,m} = \mathbf{D}\diag(\mathbf{z}_{u,m})\boldsymbol{\Phi}$, where $\mathbf{D}^H\mathbf{D} = \mathbf{I}$. A sufficient condition for $\boldsymbol{\Psi}_{p,\ell}^{H}\boldsymbol{\Psi}_{u,m}=\mathbf{0}$ for all $\ell,m$ is
\begin{equation}
\mathbf{z}_{p,\ell}^{*}\odot\mathbf{z}_{u,m}=\mathbf{0}\quad\forall\,\ell,m\in\{0,1,\ldots,L-1\}.
\label{eq:disjoint_support_condition}
\end{equation}
\end{proposition}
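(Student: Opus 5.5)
The plan is to compute the block $\boldsymbol{\Psi}_{p,\ell}^{H}\boldsymbol{\Psi}_{u,m}$ directly and show that the observation transform $\mathbf{D}$ and the basis $\boldsymbol{\Phi}$ factor out of it, leaving only an elementwise product of the shifted pilot and data vectors. First, the unitary transform cancels: since $\mathbf{D}^H\mathbf{D}=\mathbf{I}$,
\[
\boldsymbol{\Psi}_{p,\ell}^{H}\boldsymbol{\Psi}_{u,m}
=\boldsymbol{\Phi}^{H}\diag(\mathbf{z}_{p,\ell})^{H}\mathbf{D}^{H}\mathbf{D}\diag(\mathbf{z}_{u,m})\boldsymbol{\Phi}
=\boldsymbol{\Phi}^{H}\diag(\mathbf{z}_{p,\ell}^{*})\diag(\mathbf{z}_{u,m})\boldsymbol{\Phi}.
\]
Here I use that the Hermitian transpose of a diagonal matrix is the diagonal matrix of the conjugated entries.

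Second, products of diagonal matrices are diagonal with entrywise products, so $\diag(\mathbf{z}_{p,\ell}^{*})\diag(\mathbf{z}_{u,m})=\diag(\mathbf{z}_{p,\ell}^{*}\odot\mathbf{z}_{u,m})$. The block therefore equals $\boldsymbol{\Phi}^{H}\diag(\mathbf{z}_{p,\ell}^{*}\odot\mathbf{z}_{u,m})\boldsymbol{\Phi}$. Under \eqref{eq:disjoint_support_condition}, the inner diagonal matrix is zero for every pair $(\ell,m)$, so every block vanishes. This holds whatever $\boldsymbol{\Phi}$ and $\mathbf{D}$ are, which gives the claimed independence from the basis.

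Third, I connect the result to the global ZPDI condition \eqref{eq:orthogonality-condition}. The full matrix $\mathbf{R}_{pu}(\mathbf{s}_u)=\boldsymbol{\Psi}_p^H\boldsymbol{\Psi}_u(\mathbf{s}_u)$ consists exactly of the $L^2$ blocks in \eqref{eq:orthogonality_requirement}. These blocks arise from $\mathbf{Z}(\mathbf{x})\mathbf{B}=[\diag(\mathbf{z}_0)\boldsymbol{\Phi},\ldots,\diag(\mathbf{z}_{L-1})\boldsymbol{\Phi}]$ via $\mathbf{B}=\mathbf{I}_L\otimes\boldsymbol{\Phi}$, and by linearity $\mathbf{z}_\ell=\mathbf{z}_{p,\ell}+\mathbf{z}_{u,\ell}$ with $\mathbf{z}_{p,\ell}=\mathbf{S}_\ell\mathbf{C}\mathbf{P}_p\mathbf{s}_p$ and $\mathbf{z}_{u,\ell}=\mathbf{S}_\ell\mathbf{C}\mathbf{P}_u\mathbf{s}_u$. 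So the vanishing of all blocks gives $\mathbf{R}_{pu}(\mathbf{s}_u)=\mathbf{0}$, and Proposition~\ref{prop:optimal-estimator} then applies. The proposition only needs sufficiency, so no converse is required.

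There is no real technical obstacle; the argument is a short algebraic identity. The one step needing care is the quantifier over payloads. Condition \eqref{eq:disjoint_support_condition} should be read as holding for every admissible $\mathbf{s}_u$, since $\mathbf{z}_{u,m}$ depends on $\mathbf{s}_u$. In practice it will be imposed as a support condition: the support of $\mathbf{S}_\ell\mathbf{C}\mathbf{P}_p\mathbf{s}_p$ must be disjoint from the union over admissible payloads of the supports of $\mathbf{S}_m\mathbf{C}\mathbf{P}_u\mathbf{s}_u$. I will note that under this reading the conclusion holds for all $\mathbf{s}_u$ simultaneously.
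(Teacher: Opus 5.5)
Your proposal is correct and follows the paper's proof essentially verbatim: cancel $\mathbf{D}^H\mathbf{D}=\mathbf{I}$, merge the two diagonal factors into $\diag(\mathbf{z}_{p,\ell}^{*}\odot\mathbf{z}_{u,m})$, and note that this middle matrix vanishes under the condition. Your extra remarks, on assembling the $L^2$ blocks into $\mathbf{R}_{pu}(\mathbf{s}_u)$ and on reading the condition for every admissible payload, are accurate but go beyond what the proposition itself requires.
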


\begin{IEEEproof}
The cross-correlation block can be expressed as
\begin{align}
\boldsymbol{\Psi}_{p,\ell}^{H}\boldsymbol{\Psi}_{u,m} & =\boldsymbol{\Phi}^{H}\diag(\mathbf{z}_{p,\ell}^{*})\mathbf{D}^{H}\mathbf{D}\diag(\mathbf{z}_{u,m})\boldsymbol{\Phi}\nonumber\\
 & =\boldsymbol{\Phi}^{H}\diag(\mathbf{z}_{p,\ell}^{*}\odot\mathbf{z}_{u,m})\boldsymbol{\Phi},
\label{eq:hadamard_product_form}
\end{align}
where the second equality uses $\mathbf{D}^{H}\mathbf{D}=\mathbf{I}$ and $\diag(\mathbf{a})\diag(\mathbf{b})=\diag(\mathbf{a}\odot\mathbf{b})$. If \eqref{eq:disjoint_support_condition} holds, the middle diagonal matrix vanishes and the result follows for all $\ell,m$.
\end{IEEEproof}

The disjoint-support condition \eqref{eq:disjoint_support_condition} is a practical design rule. The vectors $\mathbf{z}_{p,\ell}$ and $\mathbf{z}_{u,m}$ are the pilot and data seen through delay shifts $\ell,m\in\{0,\ldots,L-1\}$, and the condition asks that no shifted pilot land on a shifted data position, so pilot and data must be separated by the full delay support on either side. This is precisely the role of the guard: a narrower guard lets a shifted pilot overlap the data and reintroduces $\mathbf{R}_{pu}(\mathbf{s}_u)\boldsymbol{\gamma}$, whereas a wider one only consumes resource elements without improving the decoupling.

\begin{remark}[Use as an iterative-receiver initializer]
The estimator in \eqref{eq:optimal_solution_zero_interference} is available before payload detection and can therefore initialize decision-directed BEM receivers. After detection, one may rebuild $\boldsymbol{\Psi}(\hat{\mathbf{s}}_u)$ and apply the data-aided LS update in \eqref{eq:ls_solution_known_symbols}. Thus the ZPDI estimator can be used either as a stand-alone pilot-only estimator or as the first stage of an iterative receiver, such as~\cite{liu2022,qu2021a}, without modification.
\end{remark}

\section{Identifiability, Conditioning, and Complexity of the ZPDI Estimator}
\label{sec:gram-properties}

Once $\mathbf{R}_{pu}(\mathbf{s}_u)=\mathbf{0}$, the payload term disappears from the pilot statistic and the estimator is governed entirely by the pilot Gram matrix $\mathbf{R}_{pp}$. This matrix describes the inner geometry of the pilot observation, rather than the level of interference: it is the Gram matrix of the channel-shifted pilot waveforms through which $\boldsymbol{\gamma}$ is observed. The guard must annihilate $\mathbf{R}_{pu}$, whereas, after removal of the data term, $\mathbf{R}_{pp}$ determines whether $\boldsymbol{\gamma}$ can be recovered and how sensitive that recovery is. Its rank sets coefficient \emph{identifiability}; its spectrum sets estimator \emph{conditioning} and the associated noise enhancement; and its structure sets the \emph{computational cost} of the inversion. The following subsections examine these roles separately.

\subsection{Identifiability}
\label{sec:identifiability} 

The ZPDI condition removes the payload from the pilot statistic, but it does not guarantee that the $QL$ BEM coefficients in $\boldsymbol{\gamma}$ can be recovered. The estimator \eqref{eq:optimal_solution_zero_interference} is well defined only if $\boldsymbol{\Psi}_p$ has full column rank, that is, if
\begin{equation} 
\rank(\boldsymbol{\Psi}_p)=QL,
\qquad\text{equivalently}\qquad
\mathbf{R}_{pp}\succ\mathbf{0}.
\label{eq:identifiability-condition}
\end{equation}

This requirement is what limits the BEM order. The pilot observation matrix $\boldsymbol{\Psi}_p=\mathbf{D}\mathbf{Z}_p\mathbf{B}$ has $QL$ columns, but it is built from the known pilot excitation $\mathbf{Z}_p\triangleq\mathbf{Z}(\mathbf{x}_p)$, which produces only $N_{\mathrm{obs}}\triangleq\rank(\mathbf{Z}_p)$ independent scalar observations. The basis $\mathbf{B}$ cannot create observations that the pilots do not excite, so $\rank(\boldsymbol{\Psi}_p)\le N_{\mathrm{obs}}$ and full column rank requires
\begin{equation}
N_{\mathrm{obs}}\ge QL .
\label{eq:counting-condition}
\end{equation}
The admissible BEM order is therefore capped at
\begin{equation}
Q \le \left\lfloor \frac{N_{\mathrm{obs}}}{L}\right\rfloor .
\label{eq:q-identifiability-ceiling}
\end{equation}
The bound is a direct count: the $L$ delay taps carry $Q$ coefficients each, for $QL$ unknowns, while the pilots supply only $N_{\mathrm{obs}}$ independent equations. Choosing $Q$ above \eqref{eq:q-identifiability-ceiling} makes $\boldsymbol{\Psi}_p$ rank deficient, so distinct coefficient vectors produce the same pilot statistic and no unique estimate exists. The count is necessary but not sufficient, since the retained basis functions may still be linearly dependent under the pilot operator; the operative requirement remains \eqref{eq:identifiability-condition}.

\subsection{Conditioning and Noise Enhancement}
\label{sec:conditioning}

Identifiability guarantees that $\mathbf{R}_{pp}$ is invertible, but not that the inversion is well conditioned. Under ZPDI the pilot statistic reduces to $\mathbf{t}_p=\mathbf{R}_{pp}\boldsymbol{\gamma}+\widetilde{\mathbf{w}}_p$, with $\widetilde{\mathbf{w}}_p$ given by \eqref{eq:matched-pilot-noise}, so once the payload is removed the estimate applies $\mathbf{R}_{pp}^{-1}$ to a noisy pilot statistic and its accuracy depends on how this inverse acts on the matched noise. When $\mathbf{R}_{pp}$ is close to a scaled identity, all coefficient directions are treated comparably and the matched noise is not amplified. When $\mathbf{R}_{pp}$ is nearly singular, the identifiability condition \eqref{eq:identifiability-condition} still holds, so the solution remains unique, but small perturbations of the pilot statistic translate into large errors in $\boldsymbol{\gamma}$. The condition number $\kappa(\mathbf{R}_{pp})$ measures this sensitivity and sets the worst-case noise enhancement of the ZPDI estimator.

\subsection{Zero Pilot--Pilot Leakage and Inversion Complexity}
\label{subsec:complexity}

By analogy with ZPDI, we define the \emph{zero pilot--pilot leakage} (ZPPL) condition as
\begin{equation}
\boldsymbol{\Psi}_{p,i}^{H}\boldsymbol{\Psi}_{p,j}
=\mathbf{0},
\quad \forall\,i\neq j.
\label{eq:zero_pilot_leakage_condition}
\end{equation}
This condition nulls the cross-Gram matrices between the pilot observations of different delay taps. It therefore governs whether the coefficient groups of distinct taps remain coupled once ZPDI is established.

Partition $\mathbf{R}_{pp}$ into $L\times L$ blocks $\mathbf{R}_{ij}\triangleq\boldsymbol{\Psi}_{p,i}^{H}\boldsymbol{\Psi}_{p,j}$, so that $\mathbf{R}_{ij}$ is the cross-Gram matrix of taps $i$ and $j$. Under ZPPL every off-diagonal block vanishes, hence
\begin{equation}
\mathbf{R}_{pp}
=\blkdiag\!\left(\mathbf{R}_{00},\ldots,
\mathbf{R}_{L-1,L-1}\right),
\label{eq:block_diagonal_normal_equations}
\end{equation}
and, provided that each $Q\times Q$ diagonal block is nonsingular, the estimator separates into $L$ per-tap estimators,
\begin{equation}
\hat{\boldsymbol{\gamma}}
=\begin{bmatrix}
\mathbf{R}_{00}^{-1}\boldsymbol{\Psi}_{p,0}^{H}\\
\vdots\\
\mathbf{R}_{L-1,L-1}^{-1}\boldsymbol{\Psi}_{p,L-1}^{H}
\end{bmatrix}\mathbf{r}.
\label{eq:block_diagonal_estimation}
\end{equation}
Thus ZPPL makes $\mathbf{R}_{pp}$ block diagonal, but not necessarily diagonal. Diagonality requires, in addition, that distinct BEM functions within the same delay tap be uncorrelated. Since
\begin{equation}
[\mathbf{R}_{\ell\ell}]_{a,b}
=\sum_{n=0}^{N-1}|z_{p,\ell}[n]|^{2}
\phi_{a}^{*}[n]\phi_{b}[n],
\label{eq:within_tap_pilot_gram}
\end{equation}
$\mathbf{R}_{pp}$ is diagonal if and only if ZPPL holds and
\begin{equation}
\sum_{n=0}^{N-1}|z_{p,\ell}[n]|^{2}
\phi_{a}^{*}[n]\phi_{b}[n]=0,
\quad \forall\,\ell,\ a\neq b.
\label{eq:within_tap_orthogonality_condition}
\end{equation}
Condition~\eqref{eq:within_tap_orthogonality_condition} is a pilot-weighted orthogonality condition on the BEM functions. The diagonal entries may still differ, so diagonality alone does not imply ideal conditioning.

These structural properties translate directly into the inversion cost. Let $d=QL$ be the number of unknowns. A dense $\mathbf{R}_{pp}$ requires an $O(d^{3})=O(Q^{3}L^{3})$ inversion, whereas under ZPPL the $L$ independent blocks reduce this cost to $O(LQ^{3})$ and can be inverted in parallel. When \eqref{eq:within_tap_orthogonality_condition} also holds, $\mathbf{R}_{pp}$ is diagonal and the inversion reduces to $QL$ scalar reciprocals.

Since the pilot pattern, the BEM basis, and the system dimensions are fixed, the projection
\begin{equation}
\mathbf{W}_{p}=\mathbf{R}_{pp}^{-1}\boldsymbol{\Psi}_{p}^{H}
\label{eq:precomputed-filter}
\end{equation}
can be precomputed once and stored. Each frame then requires only the product $\hat{\boldsymbol{\gamma}}=\mathbf{W}_{p}\mathbf{r}$, with $O(NQL)$ online complexity. The benefit of ZPPL is therefore confined to the offline inversion, which it makes cheaper and parallelizable, while the online complexity order remains the same unless further pilot sparsity is exploited.

 \section{OTFS Channel Estimation and GCE-BEM Analysis}
\label{sec:otfs-examples}

We apply the preceding framework to OTFS modulation. OTFS maps symbols to the DD domain, where the time-varying channel appears approximately quasi-static and can be estimated and equalized effectively. Fractional Doppler shifts, however, reduce DD sparsity and introduce modeling error when the BEM basis does not represent them adequately. The estimator of Section~\ref{sec:estimation-problem} operates in any linear transformation domain satisfying $\mathbf{D}^H \mathbf{D} = \mathbf{I}$ because the ZPDI condition of Section~\ref{sec:zero-pilot-interference} and the ZPPL condition of Section~\ref{subsec:complexity} do not depend on the observation domain. Setting $\mathbf{D}=\mathbf{I}$ places the estimator directly in the time domain, without loss of generality, and gives
\begin{equation}
\boxed{\hat{\boldsymbol{\gamma}}=\mathbf{R}_{pp}^{-1}\mathbf{t}_p=\left(\boldsymbol{\Psi}_{p}^{H}\boldsymbol{\Psi}_{p}\right)^{-1}\boldsymbol{\Psi}_{p}^{H}\mathbf{r}}.
\label{eq:optimal_solution_time_domain}
\end{equation}
This time-domain form is the one adopted throughout the OTFS specialization below.

\subsection{OTFS System Model}

Consider an OTFS delay--Doppler grid represented by a $K \times M$ matrix $\mathbf{S}$, where $K$ denotes the number of delay bins and $M$ denotes the number of Doppler bins. The total OTFS frame spans $N = KM$ time samples, where $N$ matches the observation interval length in the general formulation of Section~\ref{sec:channel-model}. In the reduced-CP variant of OTFS, one CP is prepended to the entire frame before transmission. After CP removal, the received signal in the DD domain can be expressed as
\begin{equation}
\mathbf{r} = \left(\mathbf{F}_{M} \otimes \mathbf{I}_{K}\right) \mathbf{H} \left(\mathbf{F}_{M}^{H} \otimes \mathbf{I}_{K}\right) \mathbf{s} + \mathbf{w},
\label{eq:otfs-system-model}
\end{equation}
where $\mathbf{F}_{M}$ is the $M$-point discrete Fourier transform (DFT) matrix, $\mathbf{H}$ is the time-domain doubly dispersive channel matrix, $\mathbf{s}$ is the $KM$-dimensional column vector obtained by column vectorization of the DD grid $\mathbf{S}$, and $\mathbf{w}$ is the additive noise vector. The system model \eqref{eq:otfs-system-model} has the same structure as the general model \eqref{eq:transform-domain-system-model} with the identification $\mathbf{D}=\left(\mathbf{F}_{M} \otimes \mathbf{I}_{K}\right)$ and $\mathbf{C}=  \left(\mathbf{F}_{M}^{H} \otimes \mathbf{I}_{K}\right)$.

\subsection{Single Pilot with Full Guard}
We consider an embedded single-pilot pattern in the DD grid, with guard symbols spanning all Doppler bins and the complete delay extent required by the channel support. This is the standard full-guard pilot configuration used in earlier OTFS work, including~\cite{raviteja2019}. We analyze whether this pattern guarantees pilot-only channel estimation under the ZPDI condition.

The pilot pattern uses a single pilot symbol $s_{p}\in\C$ placed at coordinates $(k_{p},m_{p})$ within the DD grid, where indices $k_p \in \{0, \ldots, K-1\}$ and $m_p \in \{0, \ldots, M-1\}$ specify the delay and Doppler bin positions, respectively. The pilot is surrounded by a rectangular guard region consisting of null symbols,

\begin{equation}
\mathbf{S}[k,m] = \begin{cases}
s_{p} & \text{if } k = k_{p}, \; m = m_{p} \\[0.5ex]
0 & \text{if } \begin{aligned}[t]
        &k_{p} - L + 1 \leq k \leq k_{p} + L - 1, \\
        &0 \leq m \leq M - 1, \\
        &(k,m) \neq (k_{p}, m_{p})
    \end{aligned} \\[0.5ex]
s_{u}[k,m] & \text{otherwise} \text{ (payload symbols)}
\end{cases}
\label{eq:otfs-pilot-pattern-expanded}
\end{equation}
where $L$ denotes the maximum channel delay spread expressed in delay bins, and $s_{u}[k,m]$ represents the payload symbols occupying positions outside the guard region. The guard region contains $(2L-1)M-1$ null symbols.

The pilot is typically positioned at $(k_{p},m_{p})=(\lfloor K/2\rfloor-1,\lfloor M/2\rfloor-1)$, near the geometric center of the delay--Doppler grid. The spectral-efficiency overhead of this pattern is $(2L-1)M/(KM) = (2L-1)/K$, namely, the fraction of cells reserved for channel estimation rather than data transmission. Because this overhead grows linearly with the channel-length parameter $L$, it remains moderate for channels with moderate delay spreads but becomes more costly as the delay spread increases. The applicability of the pattern therefore depends on the required balance between estimation accuracy and spectral efficiency.

\subsection{Analysis of Pilot--Data Orthogonality}

The single-pilot full-guard pattern of \eqref{eq:otfs-pilot-pattern-expanded} satisfies the zero pilot--data interference condition, so that it admits the exact, interference-free ZPDI estimator.

We evaluate the orthogonality condition in the equivalent time-domain representation obtained by setting $\mathbf{D}=\mathbf{I}$, as established above.

\begin{proposition}[Zero Pilot--Data Interference]
The single-pilot OTFS pattern defined in \eqref{eq:otfs-pilot-pattern-expanded} satisfies the zero pilot--data interference condition
\begin{equation}
\mathbf{R}_{pu}(\mathbf{s}_u)
=
\boldsymbol{\Psi}_{p}^{H}\boldsymbol{\Psi}_{u}(\mathbf{s}_u)
=\mathbf{0},\quad\forall\,\mathbf{s}_u,
\label{eq:otfs-orthogonality_requirement}
\end{equation}
where $\boldsymbol{\Psi}_{p} = \mathbf{Z}_{p}\mathbf{B}$ and $\boldsymbol{\Psi}_{u}(\mathbf{s}_u) = \mathbf{Z}_{u}\mathbf{B}$ are the pilot and payload observation matrices in the time-domain specialization $\mathbf{D}=\mathbf{I}$ introduced above.
\end{proposition}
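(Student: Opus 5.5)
The plan is to reduce the claim to the disjoint-support rule of Proposition~\ref{prop:disjoint-support}, specialized to $\mathbf{D}=\mathbf{I}$. With the blocks $\boldsymbol{\Psi}_{p,\ell}=\diag(\mathbf{z}_{p,\ell})\boldsymbol{\Phi}$ and $\boldsymbol{\Psi}_{u,m}=\diag(\mathbf{z}_{u,m})\boldsymbol{\Phi}$, it suffices to show that
\[
\mathbf{z}_{p,\ell}^{*}\odot\mathbf{z}_{u,m}=\mathbf{0}\quad\text{for all }\ell,m\in\{0,\ldots,L-1\}\text{ and every payload}.
\]
Vanishing of all $L^2$ blocks then gives \eqref{eq:orthogonality_requirement} and hence \eqref{eq:otfs-orthogonality_requirement}. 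So the whole argument reduces to describing the time-domain supports of the shifted pilot and shifted payload vectors.

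\textbf{Step 1: time-domain supports.} Write $\mathbf{x}=\mathbf{C}\mathbf{s}$ with $\mathbf{C}=\mathbf{F}_M^H\otimes\mathbf{I}_K$, and index time samples as $n=k+Km'$ with $k\in\{0,\ldots,K-1\}$ and $m'\in\{0,\ldots,M-1\}$. The Kronecker structure means the inverse DFT acts only along the Doppler index, separately for each delay row $k$. Hence $x[k+Km']$ depends only on the symbols $\mathbf{S}[k,\cdot]$ of row $k$:
\begin{itemize}
\item the pilot gives $x_p[k+Km']=\frac{s_p}{\sqrt{M}}\,\mathbf{1}[k=k_p]\,e^{j2\pi m'm_p/M}$, so its support is the residue class $n\equiv k_p \pmod K$;
\item since guard rows contain only zeros, the payload vector $\mathbf{x}_u$ is supported in the residue classes $n\equiv k \pmod K$ with $k\notin\{k_p-L+1,\ldots,k_p+L-1\}$, whatever the payload values.
\end{itemize}

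\textbf{Step 2: effect of the shifts.} In reduced-CP OTFS the shift operator is $\mathbf{S}_\ell=\boldsymbol{\Pi}^\ell$, a cyclic shift modulo $N=KM$. Because $K$ divides $N$, this shift maps the residue class $k \pmod K$ exactly onto $k+\ell \pmod K$. Therefore:
\begin{itemize}
\item $\mathbf{z}_{p,\ell}$ is supported in the class $k_p+\ell \pmod K$;
\item $\mathbf{z}_{u,m}$ is supported in the classes $k+m \pmod K$ with $k$ outside the guard rows.
\end{itemize}
An overlap would require $k\equiv k_p+(\ell-m) \pmod K$. Since $\ell-m\in\{-(L-1),\ldots,L-1\}$, this $k$ would lie in the guard rows, which is a contradiction. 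The supports are disjoint, the Hadamard product vanishes identically, and Proposition~\ref{prop:disjoint-support} completes the proof.

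\textbf{Main obstacle: wrap-around.} The delicate point is the bookkeeping modulo $K$ at the edges of the delay axis. Two things must hold:
\begin{itemize}
\item the guard interval $\{k_p-L+1,\ldots,k_p+L-1\}$ is a genuine set of $2L-1$ distinct rows modulo $K$;
\item the cyclic frame shift does not create extra aliasing between residue classes.
\end{itemize}
I will handle the second point by the divisibility $K\mid N$. For the first, I will state the mild standing assumption $K\ge 2L-1$ and read the guard rows modulo $K$; the centered choice $k_p=\lfloor K/2\rfloor-1$ with $K\ge 2L$ avoids wrap altogether. With this, the residue argument of Step 2 goes through as stated. The nonzero phase factors $e^{j2\pi m'm_p/M}$ and the payload values play no role, because only the supports enter the Hadamard product.
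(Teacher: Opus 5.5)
Your proposal is correct and follows essentially the paper's own route: you reduce to the disjoint-support rule of Proposition~\ref{prop:disjoint-support}, describe the shifted pilot and payload supports as residue classes modulo $K$ (these are exactly the paper's sets $\mathcal{P}_\ell$ and the shifted copies of $\mathcal{G}$), and note that any overlap would put a payload row $k_p+(\ell-m)$ inside the guard band. Your use of $K\mid N$ for the cyclic shift is a slightly tidier way to handle wrap-around, which the paper instead avoids by placing $[k_p-L+1,\,k_p+L-1]$ inside $\{0,\ldots,K-1\}$.
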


\begin{IEEEproof}[Proof]
We verify the disjoint-support condition of Proposition~\ref{prop:disjoint-support}. A sufficient condition for \eqref{eq:otfs-orthogonality_requirement} is
\begin{equation}
\mathbf{z}_{p,\ell}^{*} \odot \mathbf{z}_{u,\nu} = \mathbf{0}, \quad \forall\,\ell,\nu\in\{0,1,\ldots,L-1\},
\label{eq:suff-cond-pilot-payload-orthogonality}
\end{equation}
where $\mathbf{z}_{p,\ell}$ and $\mathbf{z}_{u,\nu}$ are the $\ell$-th and $\nu$-th columns of $\mathbf{Z}_{p}$ and $\mathbf{Z}_{u}$. Vectorizing the DD grid \eqref{eq:otfs-pilot-pattern-expanded} places the unshifted pilot on $\mathcal{P}_{0}=\{k_{p}+qK\mid q=0,\ldots,M-1\}$, and the cyclic shift $\mathbf{z}_{p,\ell}=\mathbf{\Pi}^{\ell}\mathbf{z}_{p,0}$ moves it to $\mathcal{P}_{\ell}=\{(k_{p}+\ell+qK)\bmod N\mid q=0,\ldots,M-1\}$. The guard region of \eqref{eq:otfs-pilot-pattern-expanded} annihilates the unshifted payload on $\mathcal{G}=\bigcup_{q=0}^{M-1}\{k_{p}-L+1+qK,\ldots,k_{p}+L-1+qK\}$, and the shifted payload vector satisfies $\mathbf{z}_{u,\nu}[n]=\mathbf{x}_{u}[(n-\nu)\bmod N]$, so it vanishes wherever $(n-\nu)\bmod N\in\mathcal{G}$, that is, on a copy of $\mathcal{G}$ shifted by $\nu$, not on $\mathcal{G}$ itself.

Placing $k_{p}$ so that $[k_{p}-L+1,k_{p}+L-1]\subset\{0,\ldots,K-1\}$ excludes wrap-around for every admissible shift. For $n\in\mathcal{P}_{\ell}$, $n=k_{p}+\ell+qK$, so $n-\nu=k_{p}+(\ell-\nu)+qK$ with $\ell-\nu\in\{-(L-1),\ldots,L-1\}$, which always lies in $\{k_{p}-L+1+qK,\ldots,k_{p}+L-1+qK\}\subseteq\mathcal{G}$; hence $\mathbf{z}_{u,\nu}[n]=\mathbf{x}_{u}[n-\nu]=0$. Every nonzero entry of $\mathbf{z}_{p,\ell}$ therefore falls on a position where $\mathbf{z}_{u,\nu}$ vanishes, for every $\ell,\nu\in\{0,\ldots,L-1\}$, which establishes \eqref{eq:suff-cond-pilot-payload-orthogonality} and, by Proposition~\ref{prop:disjoint-support}, \eqref{eq:otfs-orthogonality_requirement}.
\end{IEEEproof}

The orthogonality condition \eqref{eq:suff-cond-pilot-payload-orthogonality} imposes the constraint $K>2L-2$. This requirement is necessary and sufficient to prevent wrap-around interference in the cyclic permutations induced by the cyclic prefix under the adopted guard pattern. Indeed, the upper guard boundary $(k_{p}+L-1)$ must not interfere with the lower boundary $(k_{p}-L+1)$ after modulo-$K$ wrapping, which requires $(k_{p}+L-1)-(k_{p}-L+1)<K$, or equivalently $2L-2<K$. Additionally, proper placement of the pilot with guards on both sides requires $L-1\leq k_{p}\leq K-L$, which is possible only when $K>2L-2$.

\subsection{Pilot--Pilot Leakage}

The adopted full-guard pattern satisfies ZPDI. We next determine whether it also satisfies the ZPPL condition in \eqref{eq:zero_pilot_leakage_condition}.
\begin{proposition}[ZPPL for the OTFS Full-Guard Pattern]
For the single-pilot full-guard pattern, the shifted pilot vectors have
disjoint supports across different delay taps, i.e.,
\begin{equation}
\mathbf{z}_{p,i}^{*}\odot\mathbf{z}_{p,j}=\mathbf{0},
\quad i\neq j,
\label{eq:otfs-pilot-disjoint-support}
\end{equation}
and each shifted pilot vector satisfies
\begin{equation}
\|\mathbf{z}_{p,i}\|^{2}=\mathcal{E}_{p},
\quad i=0,\ldots,L-1,
\label{eq:otfs-pilot-energy}
\end{equation}
where $\mathcal{E}_{p} = |s_{p}|^{2}$ denotes the pilot energy.
Consequently, the pattern satisfies ZPPL for any BEM basis
$\boldsymbol{\Phi}$.
\end{proposition}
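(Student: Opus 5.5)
The plan is to work entirely in the time-domain specialization $\mathbf{D}=\mathbf{I}$ and reuse the support bookkeeping from the ZPDI proof above. First I will make the unshifted pilot vector explicit. The pilot grid has a single nonzero entry $s_p$ at $(k_p,m_p)$. Applying $\mathbf{C}=\mathbf{F}_M^H\otimes\mathbf{I}_K$ to its vectorization $\mathbf{s}_p=s_p\,\mathbf{e}_{m_p}\otimes\mathbf{e}_{k_p}$ gives
\[
\mathbf{x}_p=s_p\,(\mathbf{F}_M^H\mathbf{e}_{m_p})\otimes\mathbf{e}_{k_p}.
\]
Its entries are therefore $x_p[k_p+qK]=s_p\,e^{j2\pi q m_p/M}/\sqrt{M}$ for $q=0,\ldots,M-1$, and every other entry is zero. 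So the support is $\mathcal{P}_0$ and each nonzero entry has modulus $|s_p|/\sqrt{M}$.

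Next I will use $\mathbf{z}_{p,i}=\boldsymbol{\Pi}^{i}\mathbf{x}_p$ under the CP structure. The shifted vector is supported on $\mathcal{P}_i=\{(k_p+i+qK)\bmod N\}$. For $i\neq j$ in $\{0,\ldots,L-1\}$, an element common to $\mathcal{P}_i$ and $\mathcal{P}_j$ would require $i-j\equiv 0 \pmod K$. Since $0<|i-j|\le L-1<K$, which follows from the constraint $K>2L-2$ established after the ZPDI proof, no such element exists. The supports are disjoint, so the Hadamard product $\mathbf{z}_{p,i}^*\odot\mathbf{z}_{p,j}$ vanishes entrywise. This proves \eqref{eq:otfs-pilot-disjoint-support}.

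The energy claim \eqref{eq:otfs-pilot-energy} follows because $\boldsymbol{\Pi}^i$ is a permutation and hence norm preserving. This gives $\|\mathbf{z}_{p,i}\|^2=\|\mathbf{x}_p\|^2$. Since $\mathbf{C}$ is unitary, $\|\mathbf{x}_p\|^2=|s_p|^2$; equivalently, summing $M$ entries of squared modulus $|s_p|^2/M$ gives the same value.

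Finally, for ZPPL with an arbitrary basis I will repeat the computation from the proof of Proposition~\ref{prop:disjoint-support} with the payload replaced by a second pilot tap:
\[
\boldsymbol{\Psi}_{p,i}^H\boldsymbol{\Psi}_{p,j}=\boldsymbol{\Phi}^H\diag(\mathbf{z}_{p,i}^*\odot\mathbf{z}_{p,j})\boldsymbol{\Phi}=\mathbf{0}.
\]
This holds for any $\boldsymbol{\Phi}$ and any unitary $\mathbf{D}$. The only delicate point is whether the cyclic wrap modulo $N$, as opposed to modulo $K$, could create an overlap. Because $N=KM$, a congruence modulo $N$ implies one modulo $K$, so the modulo-$K$ argument is enough and this point is routine.
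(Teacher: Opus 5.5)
Your proof is correct and follows the paper's argument: you locate the support $\mathcal{P}_i$ of each cyclically shifted pilot, show these supports are pairwise disjoint, obtain the energy from $M$ entries of modulus $|s_p|/\sqrt{M}$ (or, equivalently, from unitarity of $\mathbf{C}$ and $\boldsymbol{\Pi}^i$), and apply the Hadamard-form cross-Gram identity of Proposition~\ref{prop:disjoint-support}. Your congruence argument, which reduces an overlap to $i-j\equiv 0 \pmod K$ with $0<|i-j|\le L-1<K$, is slightly tighter than the paper's appeal to the absence of modulo-$N$ wrap-around, since it needs only $L\le K$ and does not depend on where $k_p$ is placed.
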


\begin{IEEEproof}
For $i \neq j$, the supports of the shifted pilot vectors are
\[
\begin{aligned}
\mathcal{P}_i &= \{k_p+i+qK \mid q=0,\ldots,M-1\}, \\
\mathcal{P}_j &= \{k_p+j+qK \mid q=0,\ldots,M-1\}.
\end{aligned}
\]
They do not overlap because $i\neq j$ and, under the condition $K>2L-2$,
no modulo-$N$ wrap-around occurs for $i,j\in\{0,\ldots,L-1\}$. Hence no
sample can be nonzero in both $\mathbf{z}_{p,i}$ and $\mathbf{z}_{p,j}$,
which proves \eqref{eq:otfs-pilot-disjoint-support}. For $i = j$, each
pilot vector $\mathbf{z}_{p,i}$ contains $M$ nonzero elements at positions
$k_p + i + qK$ for $q = 0, \ldots, M-1$, each with magnitude
$|s_p|/\sqrt{M}$ due to the inverse symplectic finite Fourier transform (ISFFT) normalization. Thus
\begin{equation}
\begin{split}
\|\mathbf{z}_{p,i}\|^{2} &= \sum_{q=0}^{M-1}\left|\frac{s_{p}}{\sqrt{M}}\right|^{2} = \sum_{q=0}^{M-1}\frac{|s_{p}|^{2}}{M} \\
&= M \cdot \frac{|s_{p}|^{2}}{M} = |s_{p}|^{2} = \mathcal{E}_{p}
\end{split}
\end{equation}
which proves \eqref{eq:otfs-pilot-energy}. Finally, the cross-Gram identity
\eqref{eq:hadamard_product_form} in the proof of
Proposition~\ref{prop:disjoint-support}, applied with $\mathbf{z}_{u,m}$
replaced by $\mathbf{z}_{p,j}$, maps the disjoint supports
\eqref{eq:otfs-pilot-disjoint-support} directly to
$\boldsymbol{\Psi}_{p,i}^{H}\boldsymbol{\Psi}_{p,j}=\mathbf{0}$ for $i\neq j$,
independently of the BEM basis.
\end{IEEEproof}

The proposition establishes elementwise disjoint support, which is the condition
required for ZPPL and is stronger than ordinary vector orthogonality. It therefore verifies
\eqref{eq:zero_pilot_leakage_condition} for the OTFS full-guard pattern. By the general block-diagonal result
\eqref{eq:block_diagonal_normal_equations}, the pilot Gram matrix
$\mathbf{R}_{pp}=\boldsymbol{\Psi}_{p}^{H}\boldsymbol{\Psi}_{p}$, with
$\boldsymbol{\Psi}_{p}=\mathbf{Z}_{p}\mathbf{B}$, is therefore block diagonal.
Since each delay tap excites $M$ time-domain samples of equal energy
$\mathcal{E}_{p}/M$, it takes the explicit form
\begin{equation}
\mathbf{R}_{pp}
= \frac{\mathcal{E}_{p}}{M}\,\blkdiag\!\left(\mathbf{T}_0,\mathbf{T}_1,\ldots,\mathbf{T}_{L-1}\right),
\label{eq:otfs-rpp-blockdiag}
\end{equation}
where $\mathbf{T}_{\ell}\in\C^{Q\times Q}$ is the Gram matrix of the $Q$ BEM basis functions sampled at the $M$ pilot positions of delay tap $\ell$. Thus $\mathbf{T}_{\ell}$ is the OTFS specialization of the general diagonal block $\mathbf{R}_{\ell\ell}$ of Section~\ref{subsec:complexity}, with $\mathbf{R}_{\ell\ell}=(\mathcal{E}_{p}/M)\,\mathbf{T}_{\ell}$. The estimator then separates into the $L$ independent per-tap solves that specialize \eqref{eq:block_diagonal_estimation},
\begin{equation} 
    \boxed{
    \hat{\boldsymbol{\gamma}}_{\ell} = \frac{M}{\mathcal{E}_{p}}\,\mathbf{T}_{\ell}^{-1}\boldsymbol{\Psi}_{p,\ell}^{H}\mathbf{r}, \quad \ell = 0, 1, \ldots, L-1,
    }
    \label{eq:otfs-per-tap-estimator}
\end{equation}
where $\boldsymbol{\Psi}_{p,\ell}$ is the $\ell$-th block column of $\boldsymbol{\Psi}_{p}$.

The per-tap form makes the identifiability condition of Section~\ref{sec:identifiability} concrete for OTFS. Each $\mathbf{T}_{\ell}$ is the Gram matrix of $Q$ basis functions observed at only $M$ pilot positions, so its rank cannot exceed $M$, and the inverse in \eqref{eq:otfs-per-tap-estimator} exists only if
\begin{equation}
Q \le M .
\label{eq:otfs-identifiability}
\end{equation}
This is the per-tap instance of the general counting condition \eqref{eq:counting-condition}: the single-pilot full-guard pattern supplies $N_{\mathrm{obs}}=M$ independent observations per delay tap, so the admissible BEM order is bounded by the number of Doppler bins. 

\subsection{GCE-BEM Analysis}
\label{sec:otfs-gce-bem-analysis}

The full-guard pattern guarantees pilot--data orthogonality for any BEM choice, but the basis functions still determine the pilot Gram matrix $\mathbf{R}_{pp}=\boldsymbol{\Psi}_{p}^{H}\boldsymbol{\Psi}_{p}$, which governs numerical conditioning and inversion complexity. Since $\mathbf{R}_{pp}$ is block diagonal, the analysis reduces to the per-tap blocks $\mathbf{T}_{\ell}$, the Gram matrix of the basis functions restricted to the pilot support $\mathcal{P}_{\ell}$.

We adopt the GCE-BEM basis $\phi_{q}[n]=\tfrac{1}{\sqrt{N}}\exp(j\omega_{q}n)$, for $n=0,\ldots,N-1$ and $q=0,\ldots,Q-1$, with frequencies $\omega_{q}=\tfrac{2\pi}{NR}\big(q-\lceil (Q-1)/2\rceil\big)$. The oversampling factor $R\ge1$ sets the Doppler-grid spacing to $2\pi/(NR)$; $R=1$ recovers the conventional CE-BEM on the DFT grid $2\pi/N$, while larger $R$ refines the Doppler grid and better resolves fractional Doppler components.

For the single-pilot pattern, the pilot positions follow $n=k_{p}+\ell+qK$, $q=0,\ldots,M-1$, and direct summation of $\mathbf{T}_{\ell}[a,b]=\sum_{n\in\mathcal{P}_{\ell}}\phi_{a}^{*}[n]\phi_{b}[n]$ over this geometric sequence gives the closed-form entries
\begin{equation}
\begin{aligned}
\mathbf{T}_\ell[a,b] &= \frac{1}{K} \exp\left(j\frac{2\pi(b-a)(k_{p}+\ell)}{NR}\right) \\
&\quad \times \exp\left(j\frac{\pi(M-1)(b-a)}{MR}\right)
\mathcal{D}_{M}\left(\frac{\pi(b-a)}{MR}\right),
\end{aligned}
\label{eq:gce-bem-t-elements}
\end{equation}
where $a,b\in\{0,\ldots,Q-1\}$ and $\mathcal{D}_{M}(x)=\sin(Mx)/(M\sin x)$ is the Dirichlet kernel. Equation~\eqref{eq:gce-bem-t-elements} shows that $\mathbf{T}_{\ell}$ is Hermitian Toeplitz, with diagonal $\mathbf{T}_{\ell}[a,a]=1/K$ since $\mathcal{D}_{M}(0)=1$; for $R>1$ the Dirichlet kernel populates the off-diagonal entries, which vanish only when $b-a$ is a nonzero multiple of $R$ with $(b-a)\bmod MR\neq0$. Since $\mathbf{T}_{\ell}$ is generated by the $M$ pilot samples of $\mathcal{P}_{\ell}$, $\rank(\mathbf{T}_{\ell})\le M$, confirming for the GCE-BEM the per-tap identifiability ceiling $Q\le M$ of \eqref{eq:otfs-identifiability}: the per-tap problem is exactly determined at $Q=M$, overdetermined and better conditioned at $Q<M$, and rank deficient at $Q>M$. The oversampling factor therefore trades Doppler resolution against conditioning: a larger $R$ reduces the fractional-Doppler modeling error but, through the same loss of basis orthogonality, raises $\kappa(\mathbf{R}_{pp})$ and amplifies the noise at high SNR. 

When $R=1$, the Dirichlet kernel is orthogonal for $Q\le M$, so every per-tap block collapses to $\mathbf{T}_{\ell}=\tfrac{1}{K}\mathbf{I}_{Q}$ and the pilot Gram matrix reduces to the scaled identity $\mathbf{R}_{pp}=(\mathcal{E}_{p}/N)\mathbf{I}_{QL}$, the diagonal case of Section~\ref{subsec:complexity} with ideal conditioning $\kappa(\mathbf{R}_{pp})=1$ and no matrix inversion. The per-tap estimator \eqref{eq:otfs-per-tap-estimator} then reduces to the scaled matched filter
\begin{equation}
    \boxed{
    \hat{\boldsymbol{\gamma}}_{\ell} = \frac{N}{\mathcal{E}_{p}} \boldsymbol{\Psi}_{p,\ell}^{H}\mathbf{r}, \quad \ell = 0, 1, \ldots, L-1.
    }
    \label{eq:ce-bem-estimator}
\end{equation}

\section{Numerical Illustrations and End-to-End Results}
\label{sec:numerical_examples}

Propositions~\ref{prop:optimal-estimator} and~\ref{prop:disjoint-support} are analytical statements that hold for every admissible payload. The numerical examples instead illustrate their consequences for OTFS and assess the selected GCE-BEM in the presence of model mismatch.

The simulations use an OTFS grid with $K=128$ delay bins and $M=16$ Doppler bins, for a total of $KM=2048$ cells. The channel follows the Third Generation Partnership Project (3GPP) tapped-delay-line B (TDL-B) model~\cite{tr38901}, with a root-mean-square (RMS) delay spread of $300$~ns and Jakes' Doppler spectrum. We set the carrier frequency to $4$~GHz and the subcarrier spacing to $\Delta f=15$~kHz, consistent with fifth-generation New Radio numerology. For this setting, the discrete-time channel has length $L=4$.

A single pilot is placed at
$(k_p,m_p)=(\lfloor K/2\rfloor-1,\lfloor M/2\rfloor-1)$, using the full-guard
pattern of Section~\ref{sec:otfs-examples}. The frame then contains
$(2L-1)M=112$ pilot-plus-guard cells and $1936$ data cells. The corresponding
pilot overhead is
\[
\frac{(2L-1)M}{KM}=\frac{7}{128}\approx 5.5\%.
\]

The single nonzero pilot carries the average energy that quadrature phase-shift keying (QPSK) data symbols
would transmit over the same $(2L-1)M$ cells. All estimators use the same
transmitted frame, pilot energy, and noise variance. Table~\ref{tab:system-parameters}
collects the remaining simulation parameters.

\begin{table}[t]
    \centering
    \caption{Simulation parameters used in the OTFS numerical examples.}
    \label{tab:system-parameters}
    \begin{tabular}{|l|c|}
        \hline
        \textbf{Parameter} & \textbf{Value} \\
        \hline
        Carrier frequency & $4$ GHz \\
        Subcarrier spacing & $15$ kHz \\
        Number of delay bins ($K$) & $128$ \\
        Number of Doppler bins ($M$) & $16$ \\
        Pilot-plus-guard cells & $112$ \\
        Modulation  & QPSK \\
        Channel model & TDL-B \\
        RMS delay spread & $300$ ns \\
        Channel length ($L$) & $4$ \\
        Oversampling factor ($R$) & $2$ \\
        Mobility values & $125$, $500$ km/h \\
        GCE-BEM order ($Q$) & $3$ at $125$; $9$ at $500$ km/h \\
        CE-BEM stress order ($Q$) & $15$ at $125$ km/h \\
        \hline
    \end{tabular}
\end{table}

We evaluate the bit error rate (BER) and the channel-estimation MSE, defined as
\[
\text{MSE}=\frac{1}{NL}\,\E\!\left[\|\mathbf{G}-\hat{\mathbf{G}}\|_F^2\right],
\]
where $\mathbf{G}$ is the true sample-rate channel coefficient matrix and
$\hat{\mathbf{G}}$ is its estimate. The expectation is taken over the channel,
payload, and noise realizations.

\subsection{Illustration of the ZPDI Consequences}
\label{sec:zpdi-mechanism-validation}

We first examine three consequences of the analysis before presenting the end-to-end BER and MSE comparisons: the ZPDI decoupling of Proposition~\ref{prop:optimal-estimator},
the error decomposition in \eqref{eq:pdi-contaminated-pilot-estimate}, and the
Gram-matrix properties developed in Section~\ref{sec:gram-properties}. Unless stated otherwise, the setup
is the $500$~km/h operating point of Table~\ref{tab:system-parameters}
($K=128$, $M=16$, $L=4$, GCE-BEM with $R=2$ and $Q=9$); only the channel model
is replaced, as described next. To remove the confounding effect of BEM
modeling error, the channel is synthesized \emph{exactly} in the BEM subspace,
$\mathbf{G}=\boldsymbol{\Phi}\boldsymbol{\Gamma}$ with
$\operatorname{vec}(\boldsymbol{\Gamma})\sim\mathcal{CN}(\mathbf{0},\mathbf{I}_{QL})$,
so that $\mathbf{E}=\mathbf{0}$ and the only remaining error sources are the two
terms of \eqref{eq:pdi-contaminated-pilot-estimate}: the payload-dependent bias
and the thermal noise shaped by $\mathbf{R}_{pp}^{-1}$. Expectations are
estimated over $400$ independent channel, payload, and noise realizations
($200$ for the pilot-power sweep). The non-ZPDI configuration is the same
single-pilot pattern with the delay guard reduced to half-width $d=1<L-1$ while
the Doppler guard remains full, so that a channel shift maps the pilot onto data
cells and reintroduces $\mathbf{R}_{pu}(\mathbf{s}_u)$.

\emph{Payload decoupling.}
Figure~\ref{fig:zpdi-mechanism}(a) reports the worst-case normalized
pilot--data leakage
\begin{equation}
\epsilon_{\mathrm{PDI}}^{\max}
=\max_{j}\;\frac{\big\|\boldsymbol{\Psi}_p^H\boldsymbol{\Psi}_u(\mathbf{e}_j)\big\|_F}
{\|\mathbf{R}_{pp}\|_F},
\label{eq:eps-pdi-metric}
\end{equation}
where $\mathbf{e}_j$ activates a single payload cell. By linearity of
$\boldsymbol{\Psi}_u(\cdot)$, vanishing leakage for every elementary
$\mathbf{e}_j$ certifies it for \emph{every} admissible payload, making
\eqref{eq:eps-pdi-metric} a deterministic test of
\eqref{eq:orthogonality-condition} rather than a sampled average. Within the
tested family of symmetric delay guards
$d\in\{0,\ldots,L-1\}$ with full Doppler guard, only the full guard $d=L-1$
drives $\epsilon_{\mathrm{PDI}}^{\max}$ to machine precision ($<10^{-15}$);
deficient guards $d\in\{0,1,2\}$ leave
$\epsilon_{\mathrm{PDI}}^{\max}\in\{8.2,6.7,4.7\}\times10^{-2}$. This
illustrates Proposition~\ref{prop:optimal-estimator}: the full
guard is the member of this family satisfying the disjoint-support rule of
Proposition~\ref{prop:disjoint-support}, hence making
$\mathbf{R}_{pu}(\mathbf{s}_u)=\mathbf{0}$ for every admissible payload. The
operator-level condition \eqref{eq:orthogonality-condition} does not, in
general, single out a specific guard width, and can also be met by algebraic
cancellation without disjoint supports.

\emph{Error decomposition and the interference floor.}
Figure~\ref{fig:zpdi-mechanism}(b) plots the measured channel-estimation normalized MSE (NMSE) of
the matched-pilot estimator \eqref{eq:optimal_solution_zero_interference} versus
SNR (markers, Monte Carlo) against the closed-form prediction (solid lines)
obtained from \eqref{eq:pdi-contaminated-pilot-estimate},
\begin{equation}
\mathrm{NMSE}(\mathrm{SNR})
=\frac{\E\big\|\mathbf{R}_{pp}^{-1}\mathbf{R}_{pu}(\mathbf{s}_u)\boldsymbol{\gamma}\big\|_{\mathbf{W}}^2
+\sigma_w^2\,\tr\!\big(\mathbf{W}\mathbf{R}_{pp}^{-1}\big)}
{\E\|\boldsymbol{\gamma}\|_{\mathbf{W}}^2},
\label{eq:nmse-decomposition}
\end{equation}
with the channel-domain weighting
$\mathbf{W}=\mathbf{I}_L\otimes\boldsymbol{\Phi}^H\boldsymbol{\Phi}$ and
$\|\mathbf{x}\|_{\mathbf{W}}^2=\mathbf{x}^H\mathbf{W}\mathbf{x}$, so that
$\|\hat{\mathbf{G}}-\mathbf{G}\|_F^2=\|\hat{\boldsymbol{\gamma}}-\boldsymbol{\gamma}\|_{\mathbf{W}}^2$
under the exact-BEM channel. The Monte Carlo NMSE follows the prediction to
within the sampling scatter across the entire range. For the ZPDI full-guard
pattern the bias term vanishes and the NMSE coincides with the noise-only curve
$\sigma_w^2\tr(\mathbf{W}\mathbf{R}_{pp}^{-1})$, decreasing without a floor. For
the non-ZPDI pattern the bias term is constant in SNR and produces the high-SNR
floor predicted by \eqref{eq:pdi-contaminated-pilot-estimate}. Here the pilot
carries the QPSK-matched energy of Table~\ref{tab:system-parameters}, i.e., unit
pilot-to-data power ratio. Figure~\ref{fig:zpdi-floor-id}(a) then fixes the data
power and sweeps the pilot energy: the floor, evaluated as the SNR-independent
bias term of \eqref{eq:nmse-decomposition}, attenuates as $1/\mathcal{E}_p$ but
does not vanish at any finite pilot-to-data ratio, so the interference cannot be
removed by pilot power alone.

\emph{Identifiability, conditioning, and structure.}
The counting ceiling $Q\le M$ of \eqref{eq:otfs-identifiability} is
basis-independent, so to exhibit it without the $R=2$ conditioning confound
Figure~\ref{fig:zpdi-floor-id}(b) sweeps the BEM order for the full-guard
pattern with the well-conditioned CE-BEM ($R=1$), for which
$\mathbf{R}_{pp}=(\mathcal{E}_p/N)\mathbf{I}_{QL}$ while $Q\le M$. The smallest
eigenvalue $\lambda_{\min}(\mathbf{R}_{pp})$ stays at $\mathcal{E}_p/N$ with zero
rank deficit for all $Q\le M$: at $Q=M$ the per-tap problem is exactly
determined but still full rank. For $Q>M$ the rank deficit opens as $L(Q-M)$ and
$\lambda_{\min}$ drops to zero, so structural singularity begins at $Q=M+1$,
illustrating the ceiling \eqref{eq:otfs-identifiability}; the numerical rank uses
the tolerance $QL\,\varepsilon\,\|\mathbf{R}_{pp}\|$.
Table~\ref{tab:gram-diagnostics} collects the remaining deterministic checks at
the $Q=9$, $R=2$ operating point: the off-tap Gram energy is zero to machine
precision, verifying the ZPPL block-diagonalization of
\eqref{eq:otfs-rpp-blockdiag}, and the per-tap blocks match the closed-form
Toeplitz expression \eqref{eq:gce-bem-t-elements} to $3\times10^{-16}$. The
condition number $\kappa(\mathbf{R}_{pp})\approx7.0\times10^{5}$ enters the
predicted noise term $\sigma_w^2\tr(\mathbf{W}\mathbf{R}_{pp}^{-1})$ of
\eqref{eq:nmse-decomposition} and quantifies the noise enhancement of this
$R=2$ oversampled basis discussed in Section~\ref{sec:conditioning}; the CE-BEM
limit $R=1$ gives $\kappa(\mathbf{R}_{pp})=1$.

\begin{figure}[t]
\centering
\includegraphics[width=\linewidth]{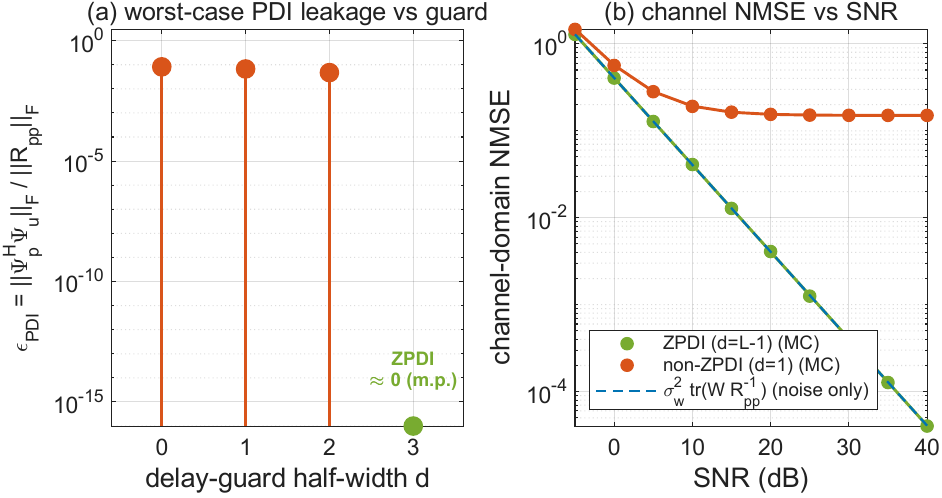}
\caption{Illustration of the ZPDI consequences under an exact-BEM channel
($\mathbf{E}=\mathbf{0}$), at the $500$~km/h operating point of
Table~\ref{tab:system-parameters} ($R=2$, $Q=9$). (a) Worst-case normalized
pilot--data leakage $\epsilon_{\mathrm{PDI}}^{\max}$ of
\eqref{eq:eps-pdi-metric} over all elementary payloads versus delay-guard
half-width $d$: it collapses to machine precision only at the full guard
$d=L-1$, as predicted by the disjoint-support rule of
Proposition~\ref{prop:disjoint-support}. (b) Channel-estimation NMSE versus SNR
for the ZPDI ($d=L-1$) and non-ZPDI ($d=1$) patterns: markers are the measured
estimator NMSE and solid lines the closed-form prediction
\eqref{eq:nmse-decomposition}; the dashed line is the noise-only term. The
non-ZPDI floor is the payload-dependent bias of
\eqref{eq:pdi-contaminated-pilot-estimate}.}
\label{fig:zpdi-mechanism}
\end{figure}

\begin{figure}[t]
\centering
\includegraphics[width=\linewidth]{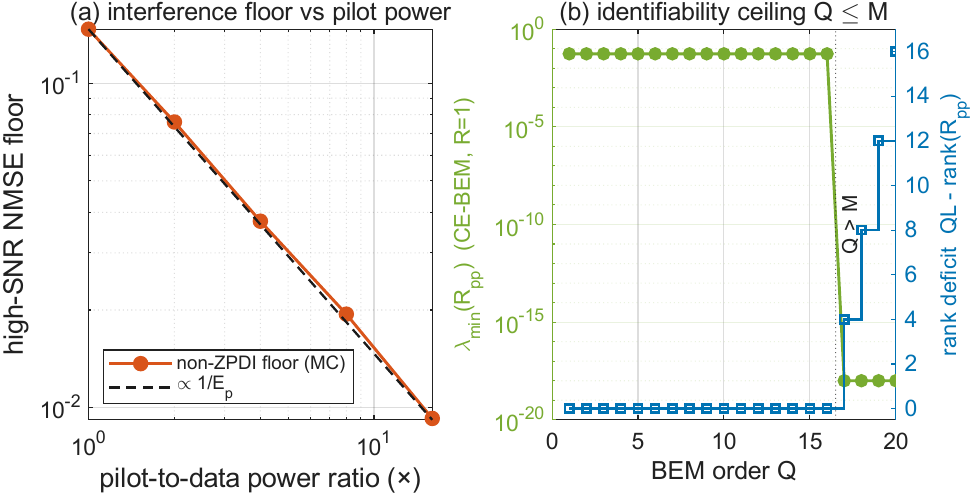}
\caption{(a) High-SNR NMSE floor of the non-ZPDI pattern versus pilot-to-data
power ratio, at fixed data power: the floor attenuates as $1/\mathcal{E}_p$
(dashed) but stays positive at every finite ratio, as anticipated after
\eqref{eq:pdi-contaminated-pilot-estimate}. (b) Smallest eigenvalue (left,
CE-BEM $R=1$) and rank deficit $QL-\rank(\mathbf{R}_{pp})$ (right) of the
full-guard Gram matrix versus BEM order $Q$: $\mathbf{R}_{pp}$ is full rank for
$Q\le M$ and rank-deficient by $L(Q-M)$ for $Q>M$, illustrating the
identifiability ceiling \eqref{eq:otfs-identifiability}.}
\label{fig:zpdi-floor-id}
\end{figure}

\begin{table}[t]
\centering
\caption{Gram-matrix diagnostics for the OTFS full-guard pattern, illustrating the
analysis of Section~\ref{sec:gram-properties}. Rows are evaluated at the $Q=9$,
$R=2$ operating point unless stated; the identifiability row uses the CE-BEM
($R=1$).}
\label{tab:gram-diagnostics}
\begin{tabular}{|l|c|c|}
\hline
\textbf{Quantity} & \textbf{Measured} & \textbf{Predicted} \\
\hline
$\epsilon_{\mathrm{PDI}}^{\max}$ (full guard) & $<10^{-15}$ & $0$ \eqref{eq:orthogonality-condition} \\
Off-tap Gram energy (rel.) & $<10^{-15}$ & $0$ \eqref{eq:zero_pilot_leakage_condition} \\
Toeplitz closed-form error & $2.8\times10^{-16}$ & $0$ \eqref{eq:gce-bem-t-elements} \\
$\rank(\mathbf{R}_{pp})$, $Q=9$ & $36$ & $QL=36$ \\
Rank deficit at $Q{=}M{+}1$ ($R{=}1$) & $L=4$ & $L(Q{-}M)$ \eqref{eq:otfs-identifiability} \\
$\kappa(\mathbf{R}_{pp})$, $R{=}2$ / $R{=}1$ & $7.0\times10^{5}$ / $1$ & --- \\
\hline
\end{tabular}
\end{table}

\subsection{ZPDI Estimator}

The standalone ZPDI receiver uses a time-domain linear minimum mean-squared
error (TD-LMMSE) equalizer built from the estimated channel, compared against
perfect CSI, the estimator of Raviteja \textit{et al.}~\cite{raviteja2019}, and
the pilot-only initialization of the iterative BEM receiver of Liu
\textit{et al.}~\cite{liu2022}.

Since~\cite{raviteja2019} returns the DD-domain matrix
$\widehat{H}_{\mathrm{dd}} \in \C^{K\times M}$ rather than BEM coefficients, we
recover slot-rate coefficients per delay tap $\ell$ by an inverse DFT along
Doppler,
\begin{equation}
    g_\ell[q] = \sum_{m=0}^{M-1}\widehat{H}_{\mathrm{dd}}[k_p+\ell,m]\,
    e^{j2\pi qm/M}, \quad q = 0,\ldots,M-1,
\end{equation}
and linearly interpolate to the sample-rate vector $\mathbf{g}_\ell$ used by the
TD-LMMSE equalizer and in the MSE evaluation.

The simulations consider $v=125$ and $500$~km/h. The BEM order is selected within
\begin{equation}
Q_{\min} \;\triangleq\; 2\left\lceil R \nu_{\max} \frac{M}{\Delta f} \right\rceil + 1
\;\le\; Q \;\le\; M,
\label{eq:q-selection}
\end{equation}
where $T_{\mathrm{obs}}=M/\Delta f$ is the frame observation time (the $M$
Doppler symbols, each of duration $1/\Delta f$), so that $\lceil R\nu_{\max}
T_{\mathrm{obs}}\rceil$ counts the GCE-BEM Doppler-grid points needed on each side
of the zero-Doppler term; hence $Q_{\min}$ is the smallest odd GCE-BEM order
covering $[-\nu_{\max},\nu_{\max}]$ and the upper bound is the identifiability
limit in \eqref{eq:otfs-identifiability}. Increasing $Q$ reduces modeling error but
enlarges the coefficient vector and Gram matrix; the MSE decomposition
in~\cite{romano2026} confirms this trade-off (noise term grows linearly with
$Q$, mismatch term decreases). The Liu receiver~\cite{liu2022} uses the $R=1$
GCE-BEM (CE-BEM special case) with $Q_S=15$ for initialization and the $R=2$
GCE-BEM with $Q_L=9$ for data-aided refinement; since $M=16$, $Q_S=15$ is the
largest admissible odd order in \eqref{eq:q-selection}.

Figs.~\ref{fig:v125-ber} and~\ref{fig:v500-ber} report BER versus SNR. At
$v=125$~km/h, GCE-BEM ($Q=3$) stays close to perfect CSI, whereas CE-BEM floors
at moderate-to-high SNR even at its largest admissible order $Q=15$. At
$v=500$~km/h, the ZPDI estimator ($R=2$, $Q=9$) is within about $2$~dB of
perfect CSI at high SNR. ZPDI removes pilot--data interference but not basis
mismatch, so the BER still depends on the selected BEM. The Liu initialization
($R=1$, $Q_S=15$) floors at high SNR despite the largest admissible odd CE-BEM
order, consistent with residual basis mismatch. The estimator
of~\cite{raviteja2019} lies between the two and also floors at $v=500$~km/h,
where fractional Doppler distorts the reconstructed pilot response.

\begin{figure*}[t]
\centering
\subfloat[BER]{
\includegraphics[width=0.45\textwidth]{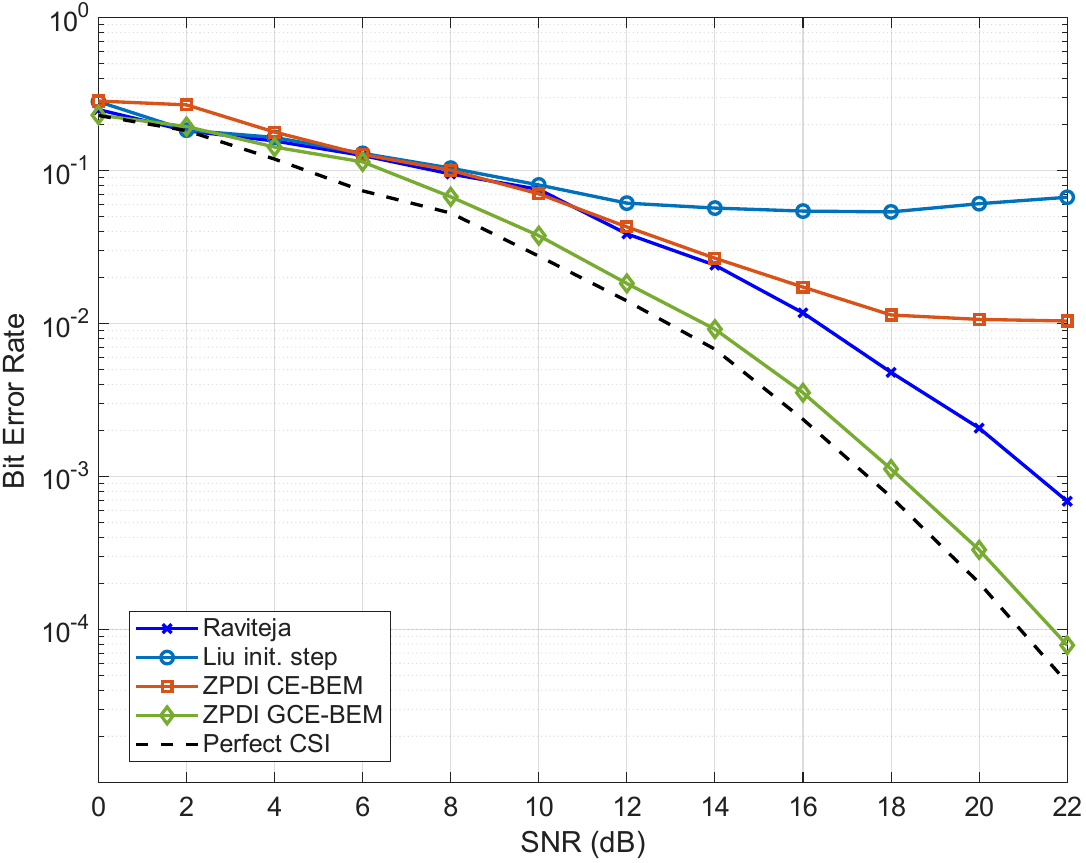}
\label{fig:v125-ber}
}
\hfill
\subfloat[MSE]{
\includegraphics[width=0.45\textwidth]{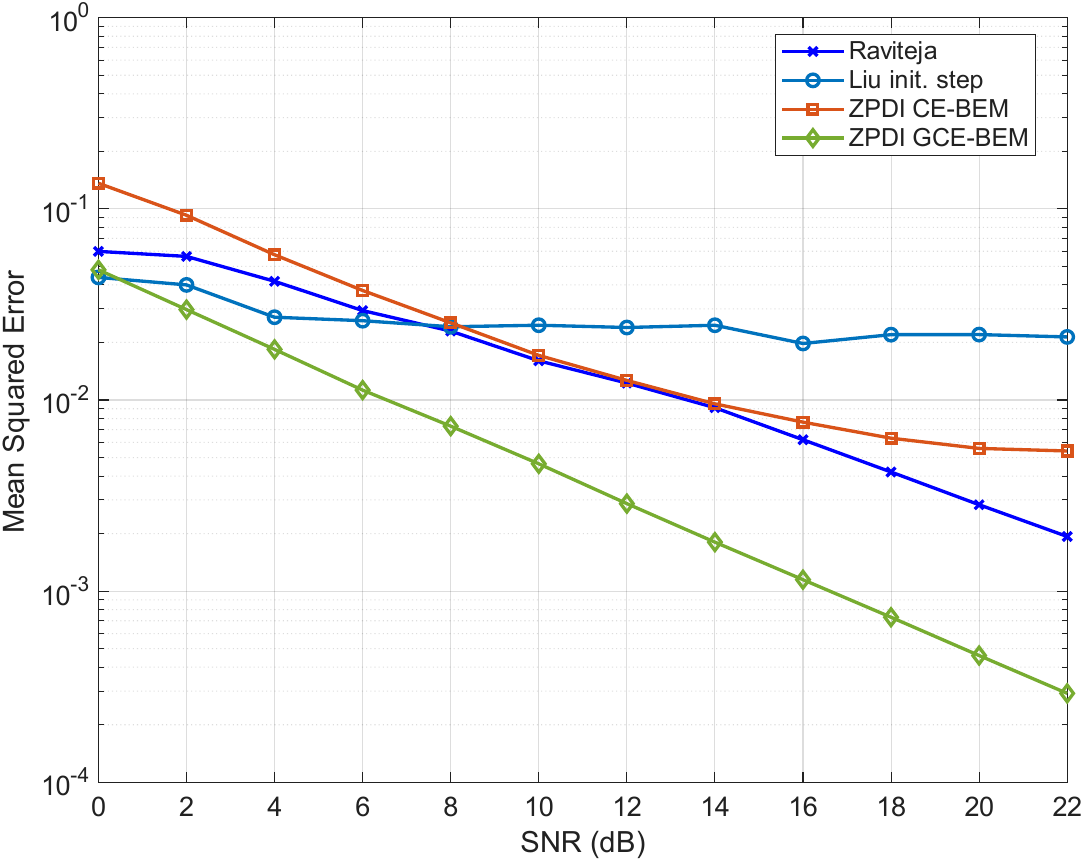}
\label{fig:v125-mse}
} 
\caption{(a) BER and (b) MSE versus SNR at $v=125$~km/h for the setup in Table~\ref{tab:system-parameters}. The curves show the ZPDI estimator with GCE-BEM, the CE-BEM stress case, the initial pilot-only step of the Liu receiver~\cite{liu2022} with $(R,Q_S)=(1,15)$, the estimator of~\cite{raviteja2019}, and perfect CSI. The CE-BEM stress case and the Liu initialization retain high-SNR floors at the largest admissible odd CE-BEM order, which is consistent with BEM mismatch. The ZPDI estimator contains no pilot--data interference by construction; see \eqref{eq:pdi-contaminated-pilot-estimate}.}
\label{fig:v125-results}
\end{figure*}

Figs.~\ref{fig:v125-mse} and~\ref{fig:v500-mse} show the corresponding MSE. The
CE-BEM and Liu-initialization floors account for their BER floors, and the
estimator of~\cite{raviteja2019} sits above the ZPDI GCE-BEM curve with a clear
floor at $v=500$~km/h. The ZPDI GCE-BEM MSE keeps decreasing with SNR in both
scenarios, with no visible interference-induced floor.

\begin{figure*}[t]
\centering
\subfloat[BER]{
\includegraphics[width=0.45\textwidth]{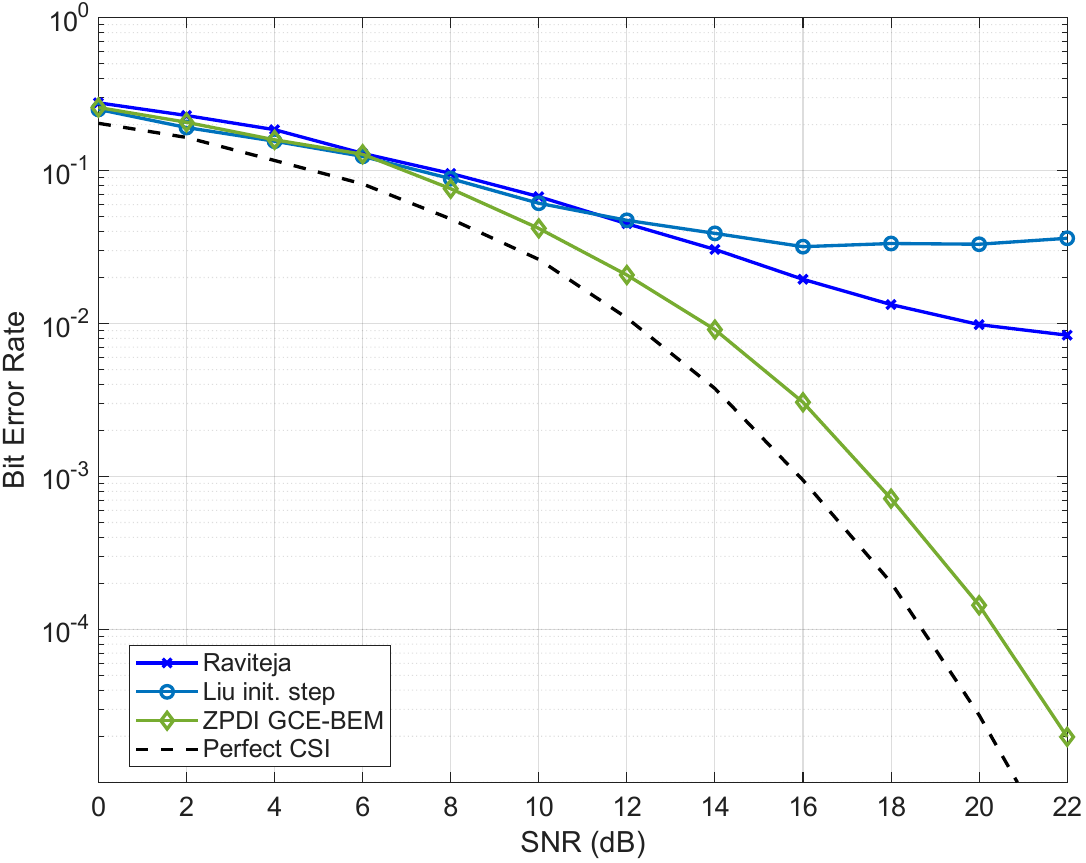}
\label{fig:v500-ber}
}
\hfill
\subfloat[MSE]{
\includegraphics[width=0.45\textwidth]{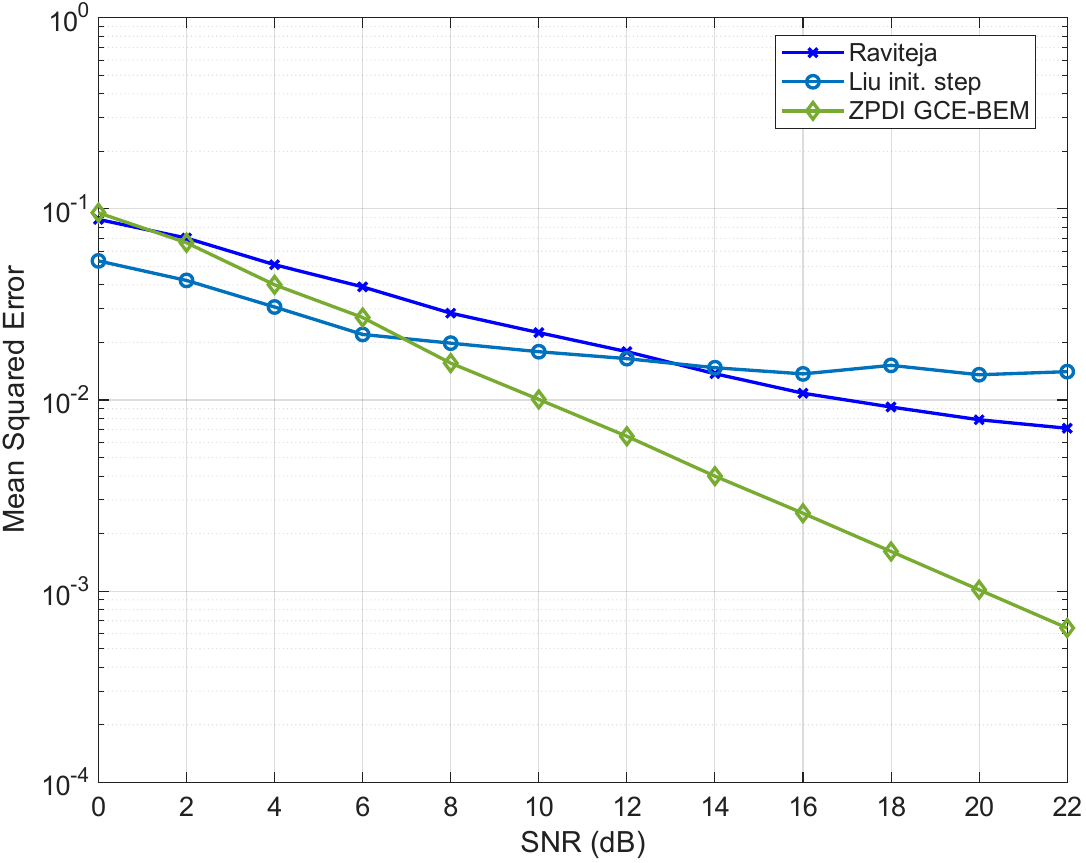}
\label{fig:v500-mse}
}
\caption{(a) BER and (b) MSE versus SNR at $v=500$~km/h. The estimators are the same as in Fig.~\ref{fig:v125-results}, except that the CE-BEM stress case is omitted; the Liu initialization uses $(R,Q_S)=(1,15)$. The reference floors are consistent with BEM or time-domain reconstruction mismatch. The ZPDI estimator has no visible floor over the simulated SNR range; see \eqref{eq:pdi-contaminated-pilot-estimate}.}
\label{fig:v500-results}
\end{figure*}

\subsection{ZPDI Initialization of a Decision-Directed Receiver}

The pilot-only ZPDI estimate can initialize an iterative receiver: tentative
payload decisions define a data-aided LS update that approximates the
genie-aided estimator in \eqref{eq:ls_solution_known_symbols}.

We evaluate this decision-directed receiver at $v=500$~km/h. Both branches share
the same OTFS frame, TDL-B channel, pilot energy, full-guard pattern, and
TD-LMMSE detector as before, so the comparison isolates the pilot-only
initialization. The ZPDI-initialized branch uses $(R,Q)=(2,9)$ in both stages;
the Liu receiver~\cite{liu2022} uses $(R,Q_S)=(1,15)$ for initialization and
$(R,Q_L)=(2,9)$ for refinement, varying the number of refinements.

Algorithm~\ref{alg:dd-receiver} gives the receiver steps, where
$\mathsf{Dec}(\mathbf{r};\hat{\mathbf{G}})$ is the TD-LMMSE detector followed by
hard QPSK slicing. At iteration $i$, the detected payload defines the
data-aided LS update
\begin{equation}
\hat{\boldsymbol{\gamma}}^{(i+1)}
=
\left(
\boldsymbol{\Psi}^{H}(\hat{\mathbf{s}}_u^{(i)})
\boldsymbol{\Psi}(\hat{\mathbf{s}}_u^{(i)})
\right)^{-1}
\boldsymbol{\Psi}^{H}(\hat{\mathbf{s}}_u^{(i)})\mathbf{r}.
\label{eq:decision-directed-ls-update}
\end{equation}
The iterations stop at $I_{\max}$ or when the update metric
$\epsilon_i^{\mathrm{upd}}\triangleq (NL)^{-1}\|\hat{\mathbf{G}}^{(i+1)}-\hat{\mathbf{G}}^{(i)}\|_F^2$
falls below $\epsilon_{\mathrm{stop}}$, for example, $10^{-2}$.

\begin{algorithm}[t]
\caption{Decision-directed receiver with ZPDI initialization}
\label{alg:dd-receiver}
\footnotesize
\begin{algorithmic}[1]
\REQUIRE $\mathbf{r}$, $\mathbf{s}_p$, $\mathbf{P}_p$, $\mathbf{P}_u$,
$\boldsymbol{\Phi}$, $\mathbf{W}_p$, $\mathsf{Dec}$, $I_{\max}$,
$\epsilon_{\mathrm{stop}}$.
\ENSURE $\hat{\mathbf{s}}_u$, $\hat{\mathbf{G}}$.
\STATE $\hat{\boldsymbol{\gamma}}^{(0)}\leftarrow\mathbf{W}_p\mathbf{r}$.
\STATE Reshape $\hat{\boldsymbol{\gamma}}^{(0)}$ into
$\hat{\boldsymbol{\Gamma}}^{(0)}$.
\STATE $\hat{\mathbf{G}}^{(0)}\leftarrow
\boldsymbol{\Phi}\hat{\boldsymbol{\Gamma}}^{(0)}$.
\FOR{$i=0,\ldots,I_{\max}-1$}
    \STATE $\hat{\mathbf{s}}_u^{(i)}\leftarrow
    \mathsf{Dec}(\mathbf{r};\hat{\mathbf{G}}^{(i)})$.
    \STATE Build $\boldsymbol{\Psi}_{u}(\hat{\mathbf{s}}_u^{(i)})$ and form
    $\boldsymbol{\Psi}(\hat{\mathbf{s}}_u^{(i)})
    =\boldsymbol{\Psi}_p+\boldsymbol{\Psi}_{u}(\hat{\mathbf{s}}_u^{(i)})$.
    \STATE Update $\hat{\boldsymbol{\gamma}}^{(i+1)}$ by
    \eqref{eq:decision-directed-ls-update}.
    \STATE Reshape $\hat{\boldsymbol{\gamma}}^{(i+1)}$ into
    $\hat{\boldsymbol{\Gamma}}^{(i+1)}$.
    \STATE $\hat{\mathbf{G}}^{(i+1)}\leftarrow
    \boldsymbol{\Phi}\hat{\boldsymbol{\Gamma}}^{(i+1)}$.
    \STATE $\epsilon_i^{\mathrm{upd}}\leftarrow
    (NL)^{-1}\|\hat{\mathbf{G}}^{(i+1)}-\hat{\mathbf{G}}^{(i)}\|_F^2$.
    \IF{$\epsilon_i^{\mathrm{upd}}\leq\epsilon_{\mathrm{stop}}$}
        \STATE \textbf{break}
    \ENDIF
\ENDFOR
\STATE \textbf{return} latest $\hat{\mathbf{s}}_u^{(i)}$ and
$\hat{\mathbf{G}}^{(i+1)}$.
\end{algorithmic}
\end{algorithm}

Initialization requires only the precomputed projection $\mathbf{W}_p\mathbf{r}$,
while each refinement costs $O(N(QL)^2+(QL)^3)$ plus the detector cost, dominated
by forming and solving the data-dependent normal matrix
$\boldsymbol{\Psi}^{H}(\hat{\mathbf{s}}_u^{(i)})\boldsymbol{\Psi}(\hat{\mathbf{s}}_u^{(i)})$.

\begin{figure*}[t]
\centering
\subfloat[BER]{
\includegraphics[width=0.45\textwidth]{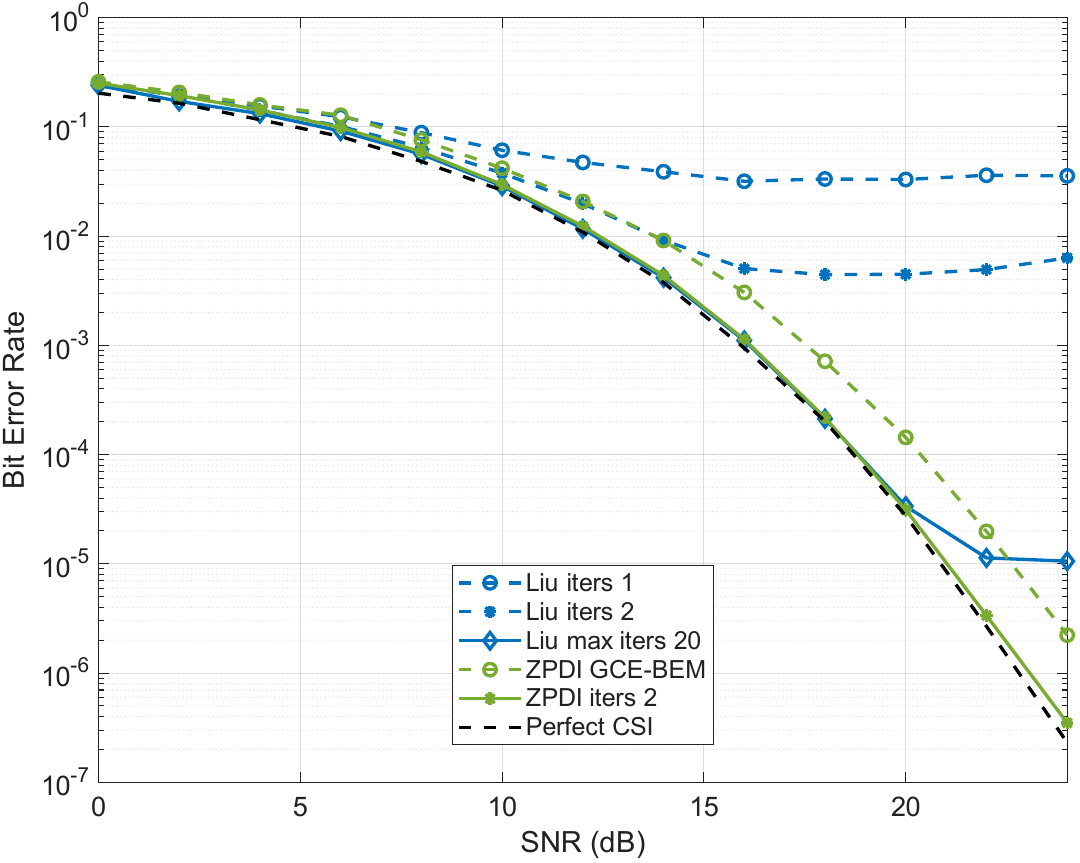}
\label{fig:dd-ber-v500}
}
\hfill
\subfloat[MSE]{
\includegraphics[width=0.45\textwidth]{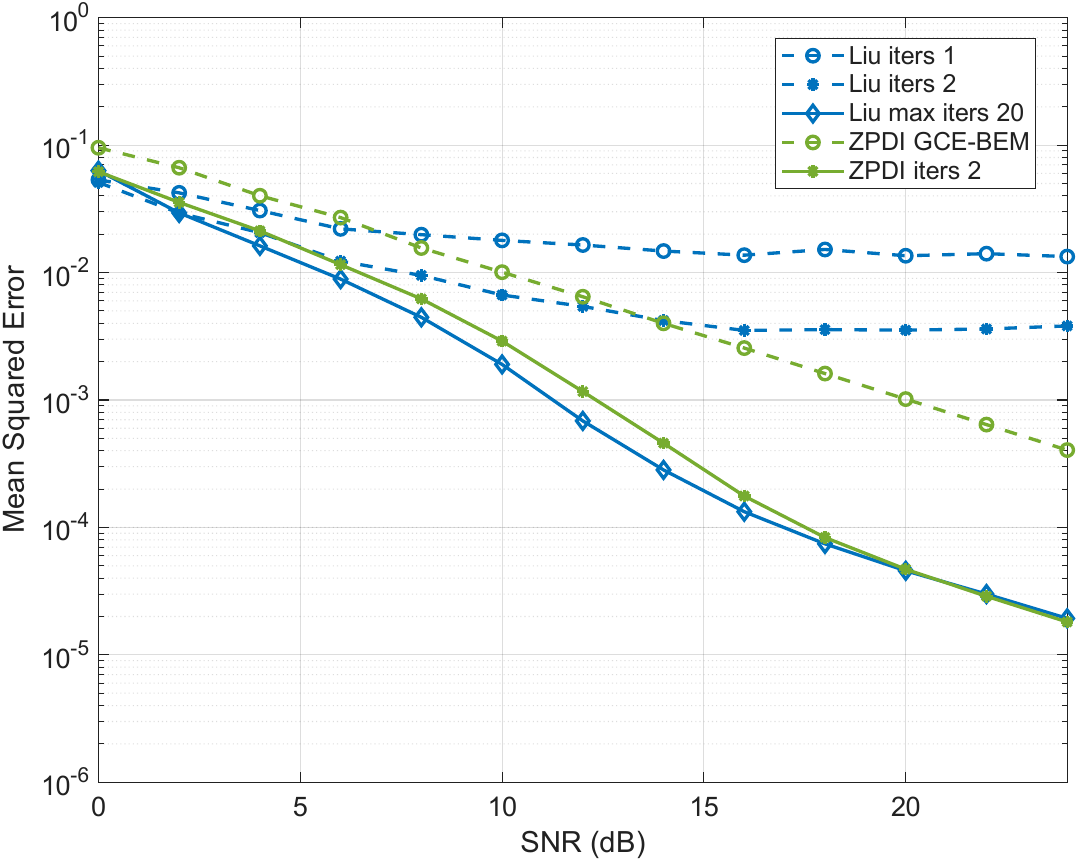}
\label{fig:dd-mse-v500}
}
\caption{Decision-directed receiver results at $v=500$~km/h: (a) BER and (b) channel-estimation MSE. The standalone ZPDI GCE-BEM curve uses $(R,Q)=(2,9)$. The ZPDI-initialized curves apply two data-aided LS refinements with the same basis. The Liu receiver~\cite{liu2022} uses $(R,Q_S)=(1,15)$ for initialization and $(R,Q_L)=(2,9)$ for refinement; its curves show one iteration, two iterations, and at most 20 iterations.}
\label{fig:dd-receiver-v500}
\end{figure*}

Fig.~\ref{fig:dd-receiver-v500} compares the two receivers. Two refinements
after ZPDI initialization bring the BER close to perfect CSI and below the
standalone pilot-only ZPDI curve. The Liu receiver retains high-SNR BER floors
after one and two iterations; a limit of 20 iterations reduces the floor but
stays above the ZPDI-initialized curve. The MSE behaves analogously: two
data-aided refinements after ZPDI initialization match the 20-iteration Liu
result at high SNR. Since both branches share the same refinement basis and LS
update, the difference is due to the pilot-only initialization.

\section{Conclusion}
\label{sec:conclusion}

This paper established exact conditions for payload-decoupled pilot-only BEM channel estimation. The result identifies when the usual matched-pilot LS estimate is payload-independent and equals the ML estimator for the reduced pilot statistic. If ZPDI fails, residual PDI becomes a deterministic, channel-scaled bias that produces a high-SNR error floor at any fixed pilot-to-data power ratio and cannot be removed by averaging. The operator-level condition covers arbitrary BEM subspaces, any unitary observation transform, and CP, ZP, and CPP guards. A basis-agnostic disjoint-support rule provides an explicit criterion for pilot, guard, and data placement. It gives an exact design test for a candidate pilot pattern. The separate pilot Gram matrix then governs identifiability, conditioning, and the cost of the single offline inversion.

Under ZPDI, the online estimator requires one precomputed projection with complexity linear in the frame length, while its error separates into thermal noise and BEM modeling error. For the OTFS single-pilot full-guard pattern, ZPDI and zero pilot--pilot leakage hold simultaneously. Controlled exact-BEM examples confirmed the predicted leakage, bias-plus-noise decomposition, high-SNR floor, and pilot-Gram structure. Over the 3GPP TDL-B channel, the ZPDI estimator with GCE-BEM approached perfect-CSI BER at $125$ and $500$~km/h and reliably initialized decision-directed refinement, whereas CE-BEM and the reference estimators exhibited high-SNR floors.

\bibliographystyle{IEEEtran}
\bibliography{IEEEabrv,OPTCE}

\end{document}